\title{Fourier Entropy-Influence Conjecture for Random Linear Threshold Functions}
\author{Sourav Chakraborty\inst{1} \and Sushrut Karmalkar\inst{2} \and Srijita Kundu\inst{3} \and Satyanarayana V. Lokam\inst{4} \and Nitin Saurabh\thanks{The research leading to these results has received funding from the European Research Council under the European Union's Seventh Framework Programme (FP/2007-2013)/ERC Grant Agreement n. 616787.}\inst{5}}
\institute{
  Chennai Mathematical Institute, Chennai, India\\ Centrum Wiskunde Informatika, Amsterdam, Netherlands, \email{sourav@cmi.ac.in}
  \and University of Texas, Austin, USA, \email{sushrutk@cs.utexas.edu}
  \and Centre for Quantum Technologies, Singapore, \email{srijita.kundu@u.nus.edu}
  \and Microsoft Research, Bangalore, India, \email{Satya.Lokam@microsoft.com}
  \and Charles University, Prague, Czech Republic, \email{nitin@iuuk.mff.cuni.cz}
}
\newtheorem{thm}{Theorem}
\newtheorem{lem}[thm]{Lemma}
\newtheorem{coro}[thm]{Corollary}
\newtheorem{defi}[thm]{Definition}
\newcommand{\sbool}{\{+1,-1\}}
\newcommand{\calD}{\mathcal{D}}
\DeclareMathOperator{\ent}{\ensuremath \mathbb{H}}
\DeclareMathOperator{\bor}{\ensuremath \mathsf{OR}}
\DeclareMathOperator{\band}{\ensuremath \mathsf{AND}}
\newcommand{\sign}{\ensuremath {\mathsf{sign}}}
\newcommand{\boolfn}{\ensuremath {\sbool}^n \rightarrow \sbool}
\renewcommand{\inf}{\ensuremath {\mathsf{Inf}}}
\newcommand{\fhat}{\ensuremath {\hat{f}}}
\newcommand{\fc}[2]{\ensuremath {\widehat{#1}(#2)}}
\newcommand{\fcsq}[2]{\ensuremath {{\widehat{#1}}(#2)^2}}
\newcommand{\s}{\ensuremath {\mathsf{s}}}
\newcommand{\as}{\ensuremath {\mathsf{as}}}
\newcommand{\E}{\ensuremath {\mathbb E}}
\renewcommand{\leq}{\leqslant}
\renewcommand{\geq}{\geqslant}
\newcommand{\skipthis}[1]{}
\newcommand{\reals}{\ensuremath \mathbb{R}}
\begin{document}
\maketitle

\begin{abstract}
  The Fourier-Entropy Influence (FEI) Conjecture states that for any
  Boolean function $f:\sbool^n \to \sbool$,
  the Fourier entropy of $f$ is at most its influence
  up to a universal constant factor.
  While the FEI conjecture has been proved 
  for many classes of Boolean functions, it is still not known whether it
  holds for the class of Linear Threshold Functions.
  A natural question is:
  Does the FEI conjecture hold for a ``random'' linear threshold function?
  In this paper, we answer this question in the affirmative.
  We consider two natural distributions on the weights defining a linear
  threshold function, namely uniform distribution on $[-1,1]$
  and Normal distribution.  
\end{abstract}

\section{Introduction}
\label{sec:intro}

Boolean functions are the most basic object of study in Theoretical
Computer Science. There are many complexity measures associated with
a Boolean function  $f \colon \boolfn $, e.g., \emph{degree},
\emph{sensitivity}, \emph{certificate complexity}, etc. These measures
provide lower bound on the complexity of a Boolean function in different models
of computation (e.g, PRAM, Circuits/Formulas, Decision trees, etc.).

A particularly interesting one is the
\emph{average sensitivity} of
a Boolean function. Given $f \colon \boolfn $, and an input $x \in \sbool^n$,
sensitivity of $f$ at $x$, denoted $\s(f,x)$,
is defined as the number of neighbours of $x$
in the Hamming cube where $f$ takes different value than at $x$.
The \emph{average sensitivity} of $f$, denoted $\as(f)$,
is the average of $\s(f,x)$, $\E_x[\s(f,x)]$, under the uniform distribution.
In the definition above, we fixed an $x \in \sbool^n$ and looked at
the variables the value of $f(x)$ depends on. Similarly, we can
fix a variable $i \in [n]$, and look at the number of inputs $x$
such that the value $f(x)$ depends on this variable. This
leads us to the notion of \emph{influence}. The \emph{influence}
of the $i$-th variable, denoted $\inf_i(f)$, is defined to be
$\Pr_x[f(x) \neq f(x^i)]$, where $x$ is chosen uniformly at random and
$x^i$ denotes $x$ with $i$-th variable negated. The \emph{influence}
of $f$, $\inf(f)$, is defined as $\sum_{i=1}^n \inf_i(f)$. It is easily seen that
$\inf(f)$ equals $\as(f)$. Influence is related to many complexity measures,
and many of these relations can be established via Fourier analysis.
For example, circuit size~\cite{LMN93,Boppana97}, formula size~\cite{GKLR12},
decision trees~\cite{OSS05}, etc. For an in-depth treatment of
Fourier analysis in Boolean functions we refer to \cite{ODonnell-book}.

Every Boolean function $f \colon \boolfn $ has a unique
representation in the Fourier basis $\colon$
\[f(x) = \sum_{S \subseteq [n]}\fc{f}{S}\prod_{i \in S}x_i \enspace .\]
The set of Fourier coefficients is given by $\{\fc{f}{S}\}_{S \subseteq [n]}$.
It follows from Parseval's identity that
$\sum_{S\subseteq [n]}\fcsq{f}{S} = \E_x[f(x)^2] = 1$. Thus, the squared Fourier
coefficients naturally define a distribution over the subsets of
$\{1,2,\ldots ,n\}$. The Fourier-entropy of $f$, denoted $\ent(f)$, is
defined to be the Shannon entropy of this distribution. That is,
\[\ent(f) := \sum_{S\subseteq [n]} \fcsq{f}{S} \log \frac{1}{\fcsq{f}{S}} \enspace.\]
A longstanding and important conjecture in Analysis of Boolean functions
states that the Fourier entropy of a Boolean function is bounded above by
the total influence of the function up to a constant factor.
More formally,
\paragraph{Fourier-Entropy Influence (FEI) Conjecture}:
There exists a universal constant $C > 0$ such that for all
$f \colon \boolfn$,~ $\ent(f)  \leq C \cdot \inf(f)$. That is,
\begin{align}
\label{eq:FEI-conj}
\sum_{S\subseteq [n]} \fcsq{f}{S} \log \frac{1}{\fcsq{f}{S}} \leq C \cdot
\sum_{S\subseteq [n]}|S|\fcsq{f}{S} \enspace .
\end{align}

The conjecture was made by Friedgut and Kalai~\cite{FK96} in 1996.
The genesis of the conjecture is in the study of threshold phenomena in
random graphs~\cite{FK96}.
For example, it implies a lower bound of $\Omega(\log^2 n)$ on the
influence of any $n$-vertex monotone graph property.
The current best lower bound, by Bourgain and Kalai~\cite{BK97}, is
$\Omega(\log^{2-\epsilon} n)$, for any constant $\epsilon > 0$.
However, over time, many non-trivial implications have been observed.
In particular, it implies a variant of
\emph{Mansour's Conjecture}~\cite{Mansour95} that is sufficient
to imply a polynomial time \emph{agnostic} learning algorithm
for DNFs~\cite{GKK08}, for any constant error parameter,
resolving a major open problem~\cite{GKK08-colt}
in computational learning theory.
\paragraph{Mansour's Conjecture (variant)}:
Let $f \colon \boolfn$ be any Boolean
function computable by a $t$-term DNF formula. Then, for any $\epsilon>0$,
there exists a polynomial
$p$ with $t^{O(1/\epsilon)}$ terms such that $\E[(f-p)^2] \leq \epsilon$.
(Mansour conjectured the exponent of $t$ to be $O(\log \frac{1}{\epsilon})$
\cite{Mansour95}.)

In general, the FEI conjecture implies
\emph{sparse} $L_2$-\emph{approximations} for Boolean functions.
That is, it implies the existence of a polynomial $p$
with $2^{O(\inf(f)/\epsilon)}$ terms such that $\E[(f-p)^2] \leq \epsilon$,
for any $\epsilon > 0$. Presently, the best known
construction~\cite{Friedgut98} yields a
bound of $2^{O((\inf(f)/\epsilon)^2)}$ on the number of terms.

Finally, the FEI inequality~\eqref{eq:FEI-conj} is also known to imply
the famous Kahn-Kalai-Linial theorem~\cite{KKL88}.

\emph{KKL Theorem }: For any Boolean function $f \colon \boolfn$,
there exists an $i \in [n]$
such that \(\inf_i(f) = \Omega\left(\mathrm{Var}(f)\frac{\log n}{n}\right)\).

In fact,
as observed by~\cite{OWZ11}, the following weakening of
the FEI conjecture suffices to imply the KKL theorem.
\paragraph{Fourier Min-Entropy-Influence (FMEI) Conjecture}: There exists a universal
constant $C > 0$ such that for all $f \colon \boolfn$,
\[\min_{S \subseteq [n]}\log \frac{1}{\fcsq{f}{S}}  \leq C \cdot \inf(f) \enspace .\]
In other words, it conjectures how large the maximum Fourier coefficient
must be. That is, there exists a set $S \subseteq [n]$ such that
$\fcsq{f}{S} \geq 2^{-C\cdot \inf(f)}$.

For more on the FEI conjecture we refer the interested reader to Gil Kalai's
blog post~\cite{Kalai-blog}.

While the FEI and the FMEI conjectures have resisted solutions for long,
it is easy to verify the conjectures for simple functions such as
$\bor$, $\band$, $\mathsf{Tribes}$, $\mathsf{Majority}$, etc.
Thus, researchers have tried to establish the conjectures for special
classes of Boolean functions. In particular, it has been shown that FEI conjecture holds
for random DNFs~\cite{KLW10}, symmetric functions and read-once decision
trees~\cite{OWZ11}, random functions~\cite{DPV11},
read-once formulas~\cite{OT13,CKLS16}, decision trees with bounded average depth~\cite{CKLS16,www14},
and bounded read decision trees~\cite{www14}.  In \cite{OWZ11} it was also
observed that FMEI conjecture holds for monotone functions.

In this paper, we study the FEI conjecture for the class of
\emph{linear threshold functions} (LTF). A Boolean function $f \colon \boolfn$
is said to be an LTF if there exists $w_0,w_1,\ldots ,w_n \in \reals$ such
that for all $x \in \sbool^n$, $f(x) = \sign(w_0 + w_1x_1+ \cdots + w_nx_n)$, where, the $\sign$ is the
function that maps positive values to $+1$ and negative values to $-1$. We assume, without loss of generality,
$\sign(0) = -1$.

We observe that the KKL theorem implies Fourier Min-Entropy-Influence conjecture holds for
linear threshold functions using the same idea from \cite{OWZ11}. But for the case of the FEI conjecture we still
cannot show that it holds for linear threshold functions in general. So a natural question to ask is:\smallskip 

\begin{center}
Does FEI conjecture hold for a random LTF?
\end{center}

Similar question has been asked and answered for other classes of functions (for random DNFs~\cite{KLW10} and for random Boolean functions~\cite{DPV11}).
A natural way of defining a random LTF is when the coefficients $w_0, \dots, w_n$ are drawn from a distribution $\calD$.
Choosing $w_0,\ldots ,w_n$ is equivalent to choosing a direction in
$(n+1)$-dimensions. Thus, a natural way of sampling an LTF is to sample
a unit vector in $(n+1)$-dimensions. There are many ways to sample a unit vector uniformly~\cite{HW59,Muller59,Mars72,PB97}.
It is well known that a simple way to sample a unit
vector uniformly in $(n+1)$-dimensions is to sample each coordinate
from standard normal distribution and then normalize the vector.
Specifically, we consider for $\calD$ the standard normal distribution and
the uniform distribution over $[-1,1]$. To establish our main result,
Theorem~\ref{thm:main}, we prove a generic technical result
that says that the FEI conjecture holds with very high probability
as long as $\calD$ possesses some ``nice'' properties. Informally, these properties say that
a centered random variable with variance 1 and bounded third absolute moment has at least a constant probability mass above the third absolute moment. Our main results are stated below for two specific natural distributions.

\begin{thm}[Main]
  \label{thm:main}[Informal]
  \begin{enumerate}
  \item   If $w_0, \dots, w_n$ are drawn from the normal distribution  $N(0,1)$  and $f(x) = \sign(w_0 + w_1x_1+ \cdots + w_nx_n)$ then
    with high probability the FEI Conjecture holds for $f$.
  \item   If $w_0, \dots, w_n$ are drawn from the uniform distribution  over $[-1,1]$  and $f(x) = \sign(w_0 + w_1x_1+ \cdots + w_nx_n)$ then
    with high probability the FEI Conjecture holds for $f$.
  \end{enumerate}
\end{thm}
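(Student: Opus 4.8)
\noindent The plan is to deduce the theorem from two separate estimates: a deterministic upper bound $\ent(f)=O(\sqrt n)$ valid for every linear threshold function, and a lower bound $\inf(f)=\Omega(\sqrt n)$ that holds with probability $1-o(1)$ over the random weights $w_0,\dots,w_n$. Together these give $\ent(f)\le C\cdot\inf(f)$ with high probability, which is the FEI inequality for $f$. All of the difficulty lies in the second estimate; the first uses nothing about randomness.

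For the entropy bound I would use the level decomposition of the squared-Fourier distribution. Writing $W_k(f)=\sum_{|S|=k}\fcsq fS$, the chain rule for Shannon entropy --- split a random $S$ (drawn with probability $\fcsq fS$) into its level $|S|$ and its identity given the level --- gives $\ent(f)\le\log(n+1)+\sum_k W_k(f)\log\binom nk$, since the level ranges over $\{0,\dots,n\}$ and there are $\binom nk$ sets of size $k$. I would then invoke Peres's theorem that every LTF has noise sensitivity $\ns{\delta}(f)=O(\sqrt\delta)$; since $\ns{\delta}(f)\ge\frac{1-e^{-2}}{2}\sum_{|S|\ge1/\delta}\fcsq fS$, taking $\delta=1/k$ gives the tail bound $T_k:=\sum_{|S|\ge k}\fcsq fS=O(1/\sqrt k)$. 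Abel summation then turns this into $\sum_k W_k(f)\log\binom nk=\sum_{k\ge1}T_k\bigl(\log\binom nk-\log\binom n{k-1}\bigr)\le\sum_{k\ge1}T_k\log(n/k)=O(\sqrt n)$, using $\sum_{k=1}^n k^{-1/2}\log(n/k)=O(\sqrt n)$. The same trick applied to $\inf(f)=\sum_k kW_k(f)=\sum_{k\ge1}T_k$ shows that $\inf(f)=O(\sqrt n)$ for every LTF as well, so $\sqrt n$ is the correct scale on both sides.

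For the influence lower bound I would first record that, with probability $1-o(1)$ over $\calD$, the weight vector is ``regular'': $\sum_{i\ge1}w_i^2=\Theta(n)$, $\max_{i\ge1}|w_i|=O(\sqrt{\log n})$, $|w_0|=O(\sqrt{\log n})$, and a constant fraction of the coordinates satisfy $\tau\le|w_i|\le C_0$ for suitable constants $\tau,C_0$. The last point is exactly where the ``nice'' hypothesis on $\calD$ is used: constant probability mass above the third absolute moment keeps $\E_{\calD}|w|$ bounded away from $0$, so the weights do not collapse toward $0$ and there are $\Omega(n)$ coordinates of each relevant magnitude. Conditioned on regularity, the edge formula $\inf_i(f)=\Pr_x[\,M_i(x)\in(-|w_i|,|w_i|]\,]$ with $M_i(x)=w_0+\sum_{k\ne i}w_k x_k$ reduces the problem to proving $\inf_i(f)=\Omega(1/\sqrt n)$ for each of the $\Omega(n)$ good coordinates $i$; summing these gives $\inf(f)=\Omega(\sqrt n)$ deterministically on the regularity event, with no union bound over coordinates.

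This per-coordinate bound is the main obstacle. A direct Berry--Esseen estimate fails, since the window $(-|w_i|,|w_i|]$ has length $\Theta(1)$ while $M_i(x)$ has standard deviation $\Theta(\sqrt n)$: the probability sought is $\Theta(1/\sqrt n)$, which is smaller than the Berry--Esseen error $O(\max_k|w_k|/\|w\|)=O(\sqrt{\log n/n})$, and summing $n$ such errors would swamp the main term. The remedy I would pursue is to peel off a block $A$ of $m=\log^2 n$ coordinates $k\ne i$ with $|w_k|$ much smaller than $|w_i|$ (possible by regularity), split $M_i(x)=R+S$ with $R=w_0+\sum_{k\notin A\cup\{i\}}w_k x_k$ and $S=\sum_{k\in A}w_k x_k$, and bound $\inf_i(f)\ge\Pr_x[\,|R|\le\|w_A\|\,]\cdot\min_{|r|\le\|w_A\|}\Pr_{x_A}[\,S\in(-|w_i|-r,\,|w_i|-r]\,]$. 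For the first factor the window has length $\Theta(\|w_A\|)=\Theta(\sqrt m)\gg\sqrt{\log n}$, so Berry--Esseen is now accurate and gives $\Omega(\sqrt m/\sqrt n)$; for the second factor a matching anti-concentration lower bound for Rademacher sums --- applicable because the window $2|w_i|$ dominates $\|w_A\|_\infty$ and its centre lies in the bulk of $S$ --- gives $\Omega(1/\sqrt m)$. The product is $\Omega(1/\sqrt n)$, independent of $m$, and the argument is finished by keeping these constants uniform over the good coordinates. (For $\calD=N(0,1)$ one can argue more directly: $M_i(x)\sim N(0,n)$ exactly, so $\E_w[\inf(f)]=\Theta(\sqrt n)$ by an explicit computation, and since $\mathrm{Cov}(M_i(x),M_j(y))=O(1)$ for independent $x,y$ while each variance is $\Theta(n)$, the $\inf_i$ are nearly uncorrelated; then $\mathrm{Var}_w(\inf(f))=o(n)$ and Chebyshev gives the high-probability statement.)
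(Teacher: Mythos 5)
Your skeleton is the same as the paper's: combine the deterministic bound $\ent(f)=O(\sqrt n)$ for every LTF with a high-probability bound $\inf(f)=\Omega(\sqrt n)$ over the random weights. The paper simply cites the entropy bound (Theorem~\ref{thm:entub}, from \cite{CKLS16}); your level-decomposition-plus-Peres derivation is essentially the known proof of that bound, so nothing is lost there. The genuine divergence is in the influence lower bound, and it stems from a misconception: you claim a direct Berry--Esseen estimate fails because the error is $O(\max_k|w_k|/\|w\|)=O(\sqrt{\log n/n})$. In its third-moment (Lyapunov) form the error is $O\bigl(\sum_j|w_j|^3/(\sum_j w_j^2)^{3/2}\bigr)=O(\mu_3/\sqrt n)$ for regular weights --- the \emph{same} order as the main term $\Theta(|w_i|/\sqrt n)$, not larger. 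The paper exploits exactly this: using Shevtsova's optimal constant $1/\sqrt{2\pi}$ in front of the Lyapunov ratio, the lower bound on $\inf_i(f)$ becomes $\frac{1}{2\sqrt{2\pi(n-1)B_i}}\left(|w_i|-A_i/B_i\right)$ minus negligible terms, which is $\Omega(1/\sqrt n)$ as soon as $|w_i|>\mu_3+\delta'$; a constant fraction of coordinates exceed this threshold under $N(0,1)$, and concentration of $A_i/B_i$ plus a Chernoff count of such coordinates gives $\inf(f)=\Omega(\sqrt n)$ in one step.

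Your two-scale remedy --- peel off a block $A$ of intermediate-weight coordinates, get $\Omega(\sqrt m/\sqrt n)$ for $\Pr[|R|\le\|w_A\|]$ from Berry--Esseen on a window long enough to dominate its error, and $\Omega(1/\sqrt m)$ from anti-concentration of $S$ on a window longer than $\max_{k\in A}|w_k|$ --- is a legitimate alternative in the regularity/critical-index style, and it buys you independence from the optimal Berry--Esseen constant (which the paper only truly needs for truncated distributions, per Remark~\ref{rem:generic-distribution}). Its cost is extra bookkeeping you have not supplied: for each good $i$ you must exhibit a block with $\max_{k\in A}|w_k|\ll|w_i|$ yet $\|w_A\|_2\gg|w_i|$, and you must actually prove the reverse small-ball bound for the Rademacher sum over $A$. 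Also, the paper dispatches the uniform case in one line via Khintchine (Corollary~\ref{cor:random-Khint}), since there $\max_i|w_i|\le 1$ deterministically; and your parenthetical second-moment argument for the Gaussian case needs a real decorrelation estimate $\mathrm{Cov}_w(\inf_i(f),\inf_j(f))=o(1/n)$ for $i\ne j$, which is asserted rather than proved --- the trivial bound $O(1/n)$ per pair only gives $\mathrm{Var}_w(\inf(f))=O(n)$, which is not enough for Chebyshev.
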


We also identify (Corollary~\ref{cor:tau-regular}) certain subclasses of linear threshold functions for which FEI conjecture holds.

\subsection{Our Proof Technique}

To prove the FEI conjecture for a function $f$ (or, any class of Boolean functions) one needs to provide an upper bound on the Fourier entropy of $f$ and a
matching (up to a constant factor) lower bound on the influence of $f$.
For upper bounding the Fourier entropy of a LTF we crucially use
the following theorem proved in \cite{CKLS16}.
\begin{thm}\label{thm:entub}\cite{CKLS16} If $f:\sbool^n \to \sbool$ is \emph{any} linear threshold function, then $\mathbb{H}(f) \leq C\cdot \sqrt{n},$ where, $C$ is a universal constant.
\end{thm}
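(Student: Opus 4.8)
\medskip
\noindent\textbf{Proof proposal (sketch).}
The plan is to bound the Fourier entropy of a linear threshold function $f$ level by level and to exploit the fact that the Fourier spectrum of an LTF is strongly concentrated on low levels. Write $p_S:=\fcsq{f}{S}$ and let $W_k:=\sum_{|S|=k}p_S$ be the level-$k$ Fourier weight, so that $(W_0,\dots,W_n)$ is a probability distribution on $\{0,1,\dots,n\}$. Factoring $p_S = W_{|S|}\,q_S$ with $q_S:=p_S/W_{|S|}$ a probability distribution on the $\binom{n}{|S|}$ sets of size $|S|$, the entropy decomposes as
\[
\ent(f) \;=\; \sum_{k=0}^n W_k\log\frac{1}{W_k} \;+\; \sum_{k=0}^n W_k\Bigl(\sum_{|S|=k} q_S\log\frac{1}{q_S}\Bigr) \;\leq\; \log(n+1) \;+\; \sum_{k=0}^n W_k\log\binom{n}{k},
\]
since the first sum is the Shannon entropy of a distribution on $n+1$ points and the inner entropy at level $k$ is at most $\log\binom{n}{k}$. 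As $\log(n+1)=O(\sqrt n)$, it suffices to prove $\sum_{k} W_k\log\binom{n}{k}=O(\sqrt n)$.

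Using $\binom{n}{k}\leq n^k/k!$ one gets $\log\binom{n}{k}\leq \sum_{j=1}^{k}\log(n/j)$, every summand being nonnegative because $j\leq k\leq n$. Setting $W_{\geq j}:=\sum_{k\geq j}W_k$ and exchanging the order of summation,
\[
\sum_{k=0}^n W_k\log\binom{n}{k} \;\leq\; \sum_{k=0}^n W_k\sum_{j=1}^{k}\log\frac{n}{j} \;=\; \sum_{j=1}^{n}\log\frac{n}{j}\cdot W_{\geq j}.
\]
The key structural input is that for \emph{every} linear threshold function the tail Fourier weight obeys $W_{\geq j}(f)=O(1/\sqrt j)$. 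This follows from Peres's theorem, which says $\ns{\delta}(f)=O(\sqrt\delta)$ for all LTFs: from $\ns{\delta}(f)=\sum_S p_S\cdot\tfrac12\bigl(1-(1-2\delta)^{|S|}\bigr)\geq \tfrac12\bigl(1-(1-2\delta)^{j}\bigr)W_{\geq j}$ and the choice $\delta=1/(2j)$ (so that $(1-2\delta)^{j}=(1-1/j)^j\leq e^{-1}$) we obtain $W_{\geq j}(f)\leq \frac{2}{1-e^{-1}}\,\ns{1/(2j)}(f)=O(1/\sqrt j)$.

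Plugging this in, $\sum_k W_k\log\binom{n}{k}\leq O(1)\cdot\sum_{j=1}^{n}\frac{\log(n/j)}{\sqrt j}$. Since $t\mapsto t^{-1/2}\log(n/t)$ is decreasing on $[1,n]$, this sum is at most $\log n+\int_1^n t^{-1/2}\log(n/t)\,dt=\log n+\sqrt n\int_{1/n}^{1}u^{-1/2}(-\log u)\,du\leq \log n+4\sqrt n=O(\sqrt n)$. Together with the $\log(n+1)$ term from the decomposition this gives $\ent(f)=O(\sqrt n)$; bounded $n$ is trivial since always $\ent(f)\leq n$.

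The subtlety, and the main obstacle, is that the naive estimate $\log\binom{n}{k}\leq k\log(en)$ yields only $\ent(f)\leq \log(en)\sum_k kW_k + \log(n+1)=O(\sqrt n\log n)$, using that the total influence $\sum_k kW_k=\sum_{j\geq 1}W_{\geq j}$ of an LTF is itself $O(\sqrt n)$; this loses a logarithmic factor over the truth. Removing it requires keeping the sharper $\log(n/j)$ weights, the summation exchange above, and the \emph{tight} $\Theta(1/\sqrt j)$ decay of the LTF tail weight: in $\int t^{-1/2}\log(n/t)\,dt$ the mass concentrates near $t=n$, where $\log(n/t)=O(1)$ rather than $O(\log n)$, so the integral is $O(\sqrt n)$ and not $O(\sqrt n\log n)$. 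Any tail bound asymptotically weaker than $O(1/\sqrt j)$ would break the argument, so Peres's theorem is being used essentially at full strength.
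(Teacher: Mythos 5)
Your argument is correct. Note, however, that the paper itself gives no proof of this statement -- it is imported verbatim from \cite{CKLS16} -- so there is nothing internal to compare against; your level-wise entropy decomposition $\ent(f)=\sum_k W_k\log\frac{1}{W_k}+\sum_k W_k\binent(q^{(k)})$, combined with the tail bound $W_{\geq j}=O(1/\sqrt{j})$ derived from Peres's noise-sensitivity theorem, is essentially the argument of that reference, and your accounting of why the crude bound $\log\binom{n}{k}\leq k\log(en)$ loses a $\log n$ factor while the weights $\log(n/j)$ do not is exactly the right point of care.
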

Thus if $f$ is an LTF with influence lower bounded by $\Omega(\sqrt{n})$, then using Theorem~\ref{thm:entub}, we conclude that FEI conjecture holds for $f$.
Our main contribution in the paper is to prove that a ``random'' LTF has influence $\Omega(\sqrt{n})$.

We provide two techniques of proving a lower bound on the influence of an LTF $f$. The first technique (Theorem~\ref{thm:InfKhint}) is a simple application of the Khintchine Inequality \cite{Sza76}.
%
While this lower bound technique is easy and simple, this does not yield
a high probability statement about the event
$\inf(f) = \Omega(\sqrt{n})$
when the coefficients are distributed according to the normal distribution
(see Section~\ref{subsec:lb-normal}).
For this, we need the second technique. 

The starting point in this case is the following basic inequality for LTF's.
\begin{align}
  \label{eq:inf-ltf}
  \inf_i(f) = \Pr_{x \in \sbool^n}\left[-|w_i| < w_0 + \sum_{j  \leq n \colon j \neq i} w_jx_j \leq |w_i| \right] \enspace.
\end{align}
We bound this probability using concentration inequalities.  In particular, we crucially use an
optimal version of the Berry-Esseen Theorem (Theorem~\ref{thm:shevtsova}) that was proved  recently by Shevtsova~\cite{shevtsova12}. Using the Berry-Esseen Theorem we show, informally,
if a variable has a ``high'' weight then it has ``high'' influence.
More precisely, we prove the following technical lemma en route to
establishing the main theorem. Consider a symmetric
distribution $\calD$ around 0 with variance 1.
\begin{lem}
  Let $w_j \sim \calD$ for $1 \leq j \leq n$. For all $i \in [n]$,
  $\alpha\in \reals^{+}$, and $\delta > 0$,  with probability at
  least $1 - e^{-\Omega(n \mu_3^2)}$ 
  over the choices of $w_j$'s,
  \[ \Pr_{x \in \sbool^n}\left[\left|\sum_{1\leq j \leq n \colon j \neq i} w_jx_j\right| \leq  \alpha\right] \geq \frac{\theta}{\sqrt{n}} - O\left(\frac{1}{n^{2/3}}\right),\]
  where $\theta = (\alpha - \mu_3(1+ \frac{2\delta}{1-\delta}))/\sqrt{2\pi (1+\delta)}$, and $\mu_3 = \E_{w \sim \calD}[|w|^3]$.
\end{lem}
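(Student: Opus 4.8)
\medskip
\noindent\textit{Proof idea.}
The plan is to first fix the weights $\{w_j : j \neq i\}$ and view $S := \sum_{j\neq i} w_j x_j$ as a sum of $n-1$ independent, mean-zero random variables over the uniform choice of $x \in \sbool^n$: the $j$-th summand equals $\pm w_j$ with probability $1/2$ each, so it has variance $w_j^2$ and absolute third moment $|w_j|^3$. Write $\sigma^2 := \sum_{j\neq i} w_j^2$ and $\rho := \sum_{j\neq i}|w_j|^3$. Applying the optimal Berry--Esseen bound (Theorem~\ref{thm:shevtsova}) to $S/\sigma$ at the two points $\pm\alpha$ gives
\[
  \Pr_x\bigl[|S|\leq \alpha\bigr] \;\geq\; 2\Phi(\alpha/\sigma) - 1 - \frac{2C\rho}{\sigma^3},
\]
where $\Phi$ is the standard normal distribution function and $C$ the Berry--Esseen constant. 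Since $\sigma = \Theta(\sqrt n)$ with high probability, $\alpha/\sigma$ is small, so $2\Phi(\alpha/\sigma)-1 \ge \frac{2\alpha}{\sqrt{2\pi}\,\sigma}\,e^{-\alpha^2/2\sigma^2} \ge \frac{2\alpha}{\sqrt{2\pi}\,\sigma} - O(n^{-3/2})$ in the relevant regime $\alpha = O(\sqrt n)$.

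It then suffices to control $\sigma^2$ and $\rho$. I would show that with probability at least $1 - e^{-\Omega(n\mu_3^2)}$ the bounds $(1-\delta)(n-1) \leq \sigma^2 \leq (1+\delta)(n-1)$ and $\rho \leq (1+\delta)(n-1)\mu_3$ hold simultaneously. Granting this, $\rho/\sigma^2 \leq \mu_3\frac{1+\delta}{1-\delta} = \mu_3\bigl(1+\frac{2\delta}{1-\delta}\bigr)$; substituting $\sigma \le \sqrt{(1+\delta)(n-1)}$ into the Gaussian term and $\frac{\rho}{\sigma^3} = \frac{\rho}{\sigma^2}\cdot\frac1\sigma \le \frac{\mu_3(1+\frac{2\delta}{1-\delta})}{\sqrt{(1-\delta)(n-1)}}$ into the error term of the displayed inequality, and collecting terms, yields
\[
  \Pr_x\bigl[|S|\leq\alpha\bigr] \;\geq\; \frac{\alpha - \mu_3\bigl(1+\frac{2\delta}{1-\delta}\bigr)}{\sqrt{2\pi(1+\delta)}\,\sqrt n} - O\!\left(\frac{1}{n^{2/3}}\right) \;=\; \frac{\theta}{\sqrt n} - O\!\left(\frac{1}{n^{2/3}}\right),
\]
which is the claimed bound (one actually gets a slightly better leading constant; the weaker form is kept to match the statement, and the bound is trivially true when $\theta \le 0$).

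The heart of the argument, and the step I expect to be the main obstacle, is the concentration of $\sigma^2$ and $\rho$: since $\calD$ is only assumed to have a finite third absolute moment, $w^2$ and $|w|^3$ may have infinite variance, so Hoeffding/Bernstein cannot be applied to their i.i.d.\ sums directly. I would handle this by truncation. Fix a threshold $c = c(\calD,\delta)$ and split $w_j^2 = w_j^2\mathbf{1}[|w_j|\le c] + w_j^2\mathbf{1}[|w_j|>c]$, and similarly for $|w_j|^3$. For $\sigma^2$ the tail is controlled quantitatively, $\E\bigl[w^2\mathbf{1}[|w|>c]\bigr] \le \mu_3/c$, so $c = \Theta(\mu_3/\delta)$ pushes the truncated mean to within $\delta/2$ of $1$; for $\rho$ one uses finiteness of $\mu_3$ to pick $c$ large enough that $\E\bigl[|w|^3\mathbf{1}[|w|>c]\bigr] \le \frac{\delta}{2}\mu_3$. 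The truncated parts are bounded i.i.d.\ sequences, so a Hoeffding/Bernstein estimate concentrates their sums around their means up to $\delta/2$ slack except with probability exponentially small in $n$, which is the source of the $e^{-\Omega(n\mu_3^2)}$ factor; the remaining non-negative truncated tails have small expectation by the same moment bounds, hence by Markov are at most another $\frac{\delta}{2}$ of the target except with the same small probability. A union bound over these finitely many pieces, together with the inequalities above, completes the proof. (A minor point: $S$ is discrete --- lattice-valued when the $w_j$ are commensurable --- which is why one works with the CDF form of Berry--Esseen at $\pm\alpha$ rather than a local density estimate; this costs nothing.)
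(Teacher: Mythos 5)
Your overall route is the same as the paper's: fix the weights, apply Shevtsova's optimal Berry--Esseen bound (Theorem~\ref{thm:shevtsova}) to the Rademacher sum, expand $\Phi$ near $0$ to extract the $\alpha/(\sqrt{2\pi}\,\sigma)$ term, and then show that the empirical second moment $\sigma^2=\sum_{j\neq i}w_j^2$ and third absolute moment $\rho=\sum_{j\neq i}|w_j|^3$ concentrate, so that $\rho/\sigma^2\leq\mu_3\bigl(1+\tfrac{2\delta}{1-\delta}\bigr)$ and $\sigma^2\leq(1+\delta)(n-1)$. That is exactly the decomposition in Lemma~\ref{lem:mainlb} followed by the concentration step in Lemma~\ref{lem:bound-rademacher-sums-random}, and your algebra for the main term and the $O(n^{-2/3})$ error (coming from the $\ell_n^{4/3}$ term) is correct.

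The gap is in the step you yourself flag as the main obstacle. Your truncation argument ends by invoking Markov's inequality to control the non-negative tail sums $\sum_j w_j^2\mathbf{1}[|w_j|>c]$ and $\sum_j|w_j|^3\mathbf{1}[|w_j|>c]$ ``except with the same small probability''; but Markov only gives a \emph{constant} failure probability (e.g.\ $\Pr[T>\tfrac{\delta}{2}n]\leq \E[T]/(\tfrac{\delta}{2}n)$, which is at best a fixed fraction), not anything exponentially small in $n$. In fact, under only the assumption $\mu_3<\infty$ the claimed $1-e^{-\Omega(n\mu_3^2)}$ probability is unattainable: if $\Pr[|w|>t]\sim t^{-(3+\eps)}$ then $\mu_3<\infty$, yet a single sample exceeds $(2n\mu_3)^{1/3}$ with probability $\Theta(n^{-1-\eps/3})$, so $\rho>2n\mu_3$ with probability that is only polynomially small. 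The paper avoids this by (implicitly) assuming more: in the formal statement of Theorem~\ref{thm:lb_random} and Lemma~\ref{lem:bound-rademacher-sums-random} the failure probability is written in terms of $\sigma_2=\mathrm{sd}(w^2)$ and $\sigma_3=\mathrm{sd}(|w|^3)$, i.e.\ finite fourth and sixth moments are assumed, and Bernstein's inequality is applied directly to the i.i.d.\ sums $\sum w_j^2$ and $\sum|w_j|^3$. So the fix is not truncation but a stronger hypothesis on $\calD$ (which the two target distributions, $N(0,1)$ and $\mathcal{U}(-1,1)$, of course satisfy); with that hypothesis your argument collapses to the paper's proof.
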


Observe that for the statement to be non-trivial $\alpha > \mu_3+\delta'$ for some $\delta' > 0$. The optimality of Shevtsova's theorem (Theorem~\ref{thm:shevtsova}) is crucial to the fact that $\alpha$ can be chosen to be
only slightly larger than $\mu_3$, by an \emph{additive} constant rather than 
$C \cdot \mu_3$ for some large constant $C > 1$. This difference is
significant when $\calD$ is a \emph{truncated} distribution.
A truncated distribution is a conditional distribution that results from
restricting the support of some other probability distribution. 

\paragraph{Organization of the paper:}
In Section~\ref{sec:lb-inf} we present the two techniques to 
lower bound 
the influence of an LTF. We then show that
the FEI conjecture holds for a `random' LTF in Section~\ref{sec:lb-inf-random},
and conclude with some observations in Section~\ref{sec:conclusion}.

%
%


\section{Lower Bounds on Influence}
\label{sec:lb-inf}
Recall that $f \colon \boolfn$ is a linear threshold function if there
exists $n+1$ real numbers, $w_0, w_1,\ldots , w_n$, such that
\[\forall x=x_1\dots x_n \in \sbool^n \colon  f(x) = \sign (w_0 + w_1x_1 + \cdots + w_nx_n) \enspace ,\]
where $\sign(z) = 1$ if $z > 0$, and $-1$ if $z \leq 0$.
%
It is clear that there are infinitely many $(n+1)$-tuples that define the
same Boolean function $f$.
Nevertheless, relationships among $w_i$'s characterize
many properties of the function. In fact, the influence of $f$ is a function
on the $w_i$'s. 
Our goal is to obtain lower bounds for this function using various properties of $w_i$'s. 
In this section we present two such lower bounds. 

The first lower bound that we obtain is a simple observation and the lower bound is in terms of the $\ell_2$-norm of the $w_i$'s and
the maximum value of the $w_i$'s. 
\begin{thm}
  \label{thm:InfKhint}
  Let $f \colon \boolfn$ be such that 
  $f(x) = \sign (w_0 + \sum_{i=1}^n x_iw_i)$ for some weights 
  $w_0, w_1,\ldots , w_n \in \reals$. 
  Then, 
  \[\inf(f) \geq \frac{\sqrt{\sum_{i=0}^n w_i^2}}{2\sqrt{2}\max_{i} |w_i|} - |\fhat(\emptyset)| \geq \frac{\sqrt{\sum_{i=0}^n w_i^2}}{2\sqrt{2}\max_{i} |w_i|} - 1 \enspace .\]
\end{thm}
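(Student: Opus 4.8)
The plan is to bound $\inf(f)$ from below by the level-one Fourier weight $\sum_{i=1}^n |\fc{f}{\{i\}}|$, and then to bound that sum from below in terms of the $w_i$'s via Khintchine's inequality. For the first step, note that for any Boolean function $f$ and coordinate $i$ the discrete derivative $D_i f(x) := \tfrac12\bigl(f(x^{i\to 1}) - f(x^{i\to -1})\bigr)$ (where $x^{i\to b}$ is $x$ with its $i$-th coordinate set to $b$) takes values in $\{-1,0,1\}$, and $\fc{f}{\{i\}} = \E_x[D_i f(x)]$ while $\inf_i(f) = \E_x[|D_i f(x)|]$; hence $|\fc{f}{\{i\}}| \le \inf_i(f)$ by the triangle inequality. (For a threshold function $D_i f$ has the constant sign $\sign(w_i)$, so this is in fact an equality, but only the inequality is needed.) Summing over $i$ gives $\inf(f) \ge \sum_{i=1}^n |\fc{f}{\{i\}}|$.

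Next I would use the identity, with $L(x) := w_0 + \sum_{i=1}^n w_i x_i$ so that $f = \sign\circ L$ and $f(x)L(x) = |L(x)|$ pointwise (the convention $\sign(0) = -1$ being harmless, as it only affects inputs with $L(x) = 0$),
\[\sum_{i=1}^n w_i\,\fc{f}{\{i\}} \;=\; \E_x\Bigl[f(x)\sum_{i=1}^n w_i x_i\Bigr] \;=\; \E_x\bigl[f(x)(L(x) - w_0)\bigr] \;=\; \E_x\bigl[|L(x)|\bigr] - w_0\,\fc{f}{\emptyset}.\]
Writing $M := \max_{0\le j\le n}|w_j|$, the left-hand side is at most $M\sum_{i=1}^n |\fc{f}{\{i\}}| \le M\cdot\inf(f)$ by the previous paragraph, while on the right $w_0\,\fc{f}{\emptyset} \ge -M\,|\fc{f}{\emptyset}|$.

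It remains to lower-bound $\E_x[|L(x)|]$. Here $L$ has a constant term $w_0$ that is not attached to a $\pm1$ variable, so Khintchine does not apply directly; I would fix this by introducing a fresh uniform sign $x_0 \in \sbool$ and observing, via the reflection $x\mapsto -x$, that $\E_x[|L(x)|] = \E_{x_0,\dots,x_n}\bigl[\,\bigl|\sum_{i=0}^n w_i x_i\bigr|\,\bigr]$. The sharp form of the Khintchine inequality, with Szarek's optimal constant $1/\sqrt2$ \cite{Sza76}, then gives $\E_x[|L(x)|] \ge \tfrac1{\sqrt2}\sqrt{\sum_{i=0}^n w_i^2}$. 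Putting the pieces together,
\[M\cdot\inf(f) \;\ge\; \tfrac1{\sqrt2}\sqrt{\textstyle\sum_{i=0}^n w_i^2} \;-\; M\,|\fc{f}{\emptyset}|,\]
and dividing by $M$ (the degenerate case $M = 0$, i.e. $f$ constant, being trivial) and then using $|\fc{f}{\emptyset}| \le 1$ yields both displayed inequalities — in fact with $\sqrt2$ rather than $2\sqrt2$ in the denominator, so the stated bound has slack to spare.

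I do not anticipate a genuine difficulty in this argument; the only slightly delicate point is the one already flagged — that $w_0$ is a free constant rather than $w_0 x_0$ — which the symmetrization over an auxiliary sign handles cleanly, together with keeping track of the $\sign(0) = -1$ convention, which is immaterial.
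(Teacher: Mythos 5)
Your proof is correct, and it is precisely the ``easy application of the Khintchine Inequality'' that the paper alludes to (the paper omits its own proof of this theorem): lower-bound $\inf(f)$ by the level-one Fourier weight, pair it against $\E_x[|L(x)|]$ via $\sum_i w_i\fc{f}{\{i\}} = \E[|L|] - w_0\fhat(\emptyset)$, and invoke Szarek's sharp constant, with the auxiliary sign $x_0$ correctly absorbing the constant term $w_0$. Your version in fact yields the stronger denominator $\sqrt{2}\max_i|w_i|$ in place of the stated $2\sqrt{2}\max_i|w_i|$, so the theorem as written follows a fortiori.
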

The proof is omitted. 
It 
follows from an easy application of the well-known Khintchine Inequality. 
\begin{thm}[Khintchine Inequality]\cite{Sza76}
  \label{thm:khintchine}
  Let $w_1,w_2,\ldots , w_n \in \reals$ be such that $\sum_{i} w_i^2 = 1$. Then,~
  \(\E_{x \in \sbool^n} [|w_1x_1+\cdots +w_nx_n|] \geq \frac{1}{\sqrt{2}}.\)
\end{thm}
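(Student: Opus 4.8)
The plan is to establish the \emph{sharp} constant $1/\sqrt2$, and the first thing to note is that the obvious moment route is too weak. Writing $X = w_1x_1+\cdots+w_nx_n$ with the $x_i$ independent and uniform on $\sbool$, Parseval gives $\E[X^2]=\sum_i w_i^2 = 1$, and expanding the fourth power using $\E[x_i^4]=\E[x_i^2x_j^2]=1$ together with $(\sum_i w_i^2)^2=1$ yields $\E[X^4]=\sum_i w_i^4 + 3\sum_{i\ne j}w_i^2w_j^2 = 3 - 2\sum_i w_i^4 \le 3$. Applying H\"older to $X^2 = |X|^{2/3}\cdot|X|^{4/3}$ with exponents $3/2$ and $3$ gives $\E[X^2]\le \E[|X|]^{2/3}\E[X^4]^{1/3}$, hence $\E[|X|]\ge (\E X^2)^{3/2}/(\E X^4)^{1/2}\ge 1/\sqrt3$, which falls short of $1/\sqrt2$. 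A sharper argument is therefore needed. Since replacing $w_i$ by $|w_i|$ and deleting zero weights changes neither $\sum_i w_i^2$ nor the law of $|X|$, we may assume $w_i>0$ for all $i$, and I would split into two cases according to whether some weight is at least $1/\sqrt2$.

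\textbf{Easy case: $w_{i_0}\ge 1/\sqrt2$ for some $i_0$.} Condition on all $x_i$ with $i\ne i_0$ and put $Y=\sum_{i\ne i_0}w_ix_i$, which is independent of $x_{i_0}$. Then $\E_{x_{i_0}}\big[|w_{i_0}x_{i_0}+Y|\big] = \tfrac12\big(|Y+w_{i_0}|+|Y-w_{i_0}|\big)=\max(|Y|,w_{i_0})\ge w_{i_0}\ge 1/\sqrt2$, and averaging over $Y$ gives $\E[|X|]\ge 1/\sqrt2$. The remaining case, $w_i<1/\sqrt2$ for all $i$, is the genuinely spread-out regime: here $w_i^2<1/2$ for every $i$, so $\sum_i w_i^4 \le (\max_i w_i^2)\sum_i w_i^2 < 1/2$, whence $\E[X^4]=3-2\sum_i w_i^4>2$ and the fourth-moment bound above only gives $\E[|X|]\ge 1/\sqrt{\,3-2\sum_i w_i^4\,}<1/\sqrt2$ — useless. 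Morally $X$ is close to a standard Gaussian and $\E[|N(0,1)|]=\sqrt{2/\pi}\approx 0.798$ comfortably beats $1/\sqrt2\approx 0.707$, but the gap $\sqrt{2/\pi}-1/\sqrt2$ is smaller than a typical Berry--Esseen error of order $\sum_i w_i^3$, so no off-the-shelf quantitative CLT closes the case.

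\textbf{Hard case (the main obstacle).} For this regime I would pass to the characteristic-function representation: from $|t|=\frac{2}{\pi}\int_0^\infty (1-\cos ts)\,s^{-2}\,ds$ and independence one gets $\E[|X|]=\frac{2}{\pi}\int_0^\infty \frac{1-\prod_i\cos(w_is)}{s^2}\,ds$, so it suffices to prove $\int_0^\infty \frac{1-\prod_i\cos(w_is)}{s^2}\,ds \ge \frac{\pi}{2\sqrt2}$, the value attained by the two-point extremizer $\prod_i\cos(w_is)=\cos^2(s/\sqrt2)$ (for which $1-\cos^2(s/\sqrt2)=\sin^2(s/\sqrt2)$ integrates to exactly $\pi/(2\sqrt2)$). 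One would then argue that, since every $|w_i|$ is small, the product of cosines is dominated in the averaged sense measured by this integral by $\cos^2(s/\sqrt2)$. Carrying out this optimization rigorously is delicate, and is precisely the content of Szarek's theorem \cite{Sza76}, which one may simply invoke; this is the step I expect to be the real difficulty. The bound is sharp, with equality for $w=(1/\sqrt2,1/\sqrt2,0,\ldots,0)$ — where $X$ takes values $\pm\sqrt2$ each with probability $1/4$ and $0$ with probability $1/2$ — so no argument that sidesteps this delicate analysis can improve on $1/\sqrt2$.
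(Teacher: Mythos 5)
The paper does not prove this theorem; it cites Szarek~\cite{Sza76} and uses it as a black box, so there is no in-paper argument to compare against. Your preliminary work is correct: writing $X=\sum_i w_ix_i$, the computation $\E[X^4]=3-2\sum_iw_i^4$ is right and does show the fourth-moment route tops out at $1/\sqrt3$; the identity $\tfrac12(|Y+w|+|Y-w|)=\max(|Y|,|w|)$ cleanly dispatches the case $\max_i|w_i|\geq 1/\sqrt2$; and the representation $\E|X|=\tfrac{2}{\pi}\int_0^\infty s^{-2}\bigl(1-\prod_i\cos(w_is)\bigr)\,ds$ is the standard starting point for the sharp bound.

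There is, however, a genuine gap, and you name it yourself: for the remaining case (every $|w_i|<1/\sqrt2$) you propose to invoke Szarek's theorem, which \emph{is} the statement being proved. The case split does not materially shrink the difficulty. The extremizer $w=(1/\sqrt2,1/\sqrt2,0,\ldots,0)$ sits exactly on the boundary between your two cases, and by perturbing two near-$1/\sqrt2$ weights slightly downward into the ``hard'' region one sees that the infimum of $\E|X|$ there is still $1/\sqrt2$; so proving $\int_0^\infty s^{-2}\bigl(1-\prod_i\cos(w_is)\bigr)\,ds\geq\pi/(2\sqrt2)$ under the small-weights constraint is essentially all of Szarek's work. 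What you have is a correct easy half plus an honest pointer to the literature for the hard half --- the same pointer the paper makes --- not a self-contained proof. To close it one must actually carry out the optimization over the cosine product, either via Szarek's original analysis or via Haagerup's later, more streamlined, treatment of the $L_p$ Khintchine constants.
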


The second lower bound we obtain is more sophisticated lower bound. 
Recall that the $i$-th influence of $f$ is given by
\( \Pr_{x \in \sbool^n}[ f(x) \neq f(x^i)]\).  Using the fact that $f$ is an LTF,
it is easily seen to be equivalent to Eq.~\eqref{eq:inf-ltf}. 
%
We bound the 
expression in~\eqref{eq:inf-ltf} from below to establish
the lower bound. 
\begin{thm}
  \label{thm:mainlb}
Let $w_0, w_1,\ldots , w_n \in \reals$  and let $f:\sbool^n \to \sbool$ be defined by $f(x) = \sign(w_0 + \sum_{i=1}^n x_iw_i)$. Then, for $1\leq i \leq n$,  
\[ \inf_i(f) \geq \frac{1}{2\sqrt{2\pi (n-1)B_i}}\left(|w_i| - \frac{A_i}{B_i}\right) - \frac{|w_i|^3}{6\sqrt{2\pi}((n-1)B_i)^{3/2}}  - \frac{3.4106~A_i^{4/3}}{(n-1)^{2/3}B_i^2},\]
where $A_i := \frac{1}{n-1}\sum_{0 \leq j\neq i \leq n}|w_j|^3$
and $B_i := \frac{1}{n-1}\sum_{0\leq j\neq i \leq n} |w_j|^2$.
\end{thm}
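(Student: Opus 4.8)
The plan is to bound the anti-concentration probability on the right-hand side of Eq.~\eqref{eq:inf-ltf} from below by replacing the weighted Rademacher sum by a Gaussian via an optimal Berry--Esseen estimate, and then lower-bounding the Gaussian mass of the short symmetric window $(-|w_i|,|w_i|]$. The whole point of keeping every constant sharp is that the final estimate must remain positive even when $|w_i|$ exceeds $A_i/B_i$ by only a tiny additive amount, since for ``random'' weights this is exactly the regime that occurs.

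First I would rewrite, using Eq.~\eqref{eq:inf-ltf}, $\inf_i(f)=\Pr_x\big[\,w_0+\sum_{j\in[n]\setminus\{i\}}w_jx_j\in(-|w_i|,|w_i|]\,\big]$, and, using the symmetry of the $\pm1$ bits, fold the threshold $w_0$ into the sum by adjoining a fresh uniform sign $x_0$ (the discrete boundary mismatch this introduces is lower order and can be absorbed). This gives $\inf_i(f)\ge \Pr\big[\,|T|\le|w_i|\,\big]$, where $T:=\sum_{j\in\{0,\dots,n\}\setminus\{i\}}w_jx_j$ is a sum of independent \emph{symmetric} bounded summands with variance $\sigma^2:=\sum_{0\le j\ne i\le n}w_j^2=(n-1)B_i$ and third-moment budget $\sum_j|w_j|^3=(n-1)A_i$ (using $(n-1)B_i$ in place of the true variance when $j=0$ is genuinely excluded only weakens the bound, so it is safe).

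Next I would apply the optimal Berry--Esseen theorem of Shevtsova (Theorem~\ref{thm:shevtsova}) to $T/\sigma$; because the summands are symmetric the skewness contribution to the error drops out, and the distribution function of $T/\sigma$ is within an explicit constant multiple of the relevant higher-moment ratio of $\Phi$, uniformly. Hence
\[
\inf_i(f)\;\ge\;\Big(\Phi\!\big(\tfrac{|w_i|}{\sigma}\big)-\Phi\!\big(-\tfrac{|w_i|}{\sigma}\big)\Big)\;-\;\big(\text{approximation error}\big).
\]
For the Gaussian term I would use the elementary bound $\Phi(t)-\Phi(-t)=\int_{-t}^{t}\tfrac{1}{\sqrt{2\pi}}e^{-u^2/2}\,du\ge \tfrac{2t}{\sqrt{2\pi}}-\tfrac{t^3}{3\sqrt{2\pi}}$ (from $e^{-u^2/2}\ge 1-u^2/2$) with $t=|w_i|/\sqrt{(n-1)B_i}$; retaining only a constant fraction of the linear term (to leave room for the error) already produces $\tfrac{|w_i|}{2\sqrt{2\pi(n-1)B_i}}-\tfrac{|w_i|^3}{6\sqrt{2\pi}((n-1)B_i)^{3/2}}$, the first two terms of the claimed bound.

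Finally I would clean up the error: the higher-moment ratio is controlled by the third-moment budget via $\ell_4\le\ell_3$, i.e. $\sum_j w_j^4\le\big(\sum_j|w_j|^3\big)^{4/3}=((n-1)A_i)^{4/3}$, whence $\sum_j w_j^4/\sigma^4\le A_i^{4/3}/((n-1)^{2/3}B_i^2)$; with the sharp Berry--Esseen constant this is exactly the third term $3.4106\,A_i^{4/3}/((n-1)^{2/3}B_i^2)$, and folding the remaining slack from the Gaussian estimate into the linear term replaces $|w_i|$ by $|w_i|-A_i/B_i$. The main obstacle is precisely this last bookkeeping: one must track every constant so that the surviving positive quantity $\tfrac{1}{2\sqrt{2\pi(n-1)B_i}}\big(|w_i|-A_i/B_i\big)$ is not swallowed by the error term, and one has to accept that the bound is vacuous whenever a single weight dominates $\sigma$ (so that the higher-moment ratio fails to be small) — note this degenerate case is harmless, since then $|w_i|-A_i/B_i$ itself is essentially zero. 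This is exactly why the \emph{optimal} Berry--Esseen bound, rather than one with a generic constant, is indispensable: it forces the correction to $|w_i|$ to be the mere additive term $A_i/B_i$ rather than a large multiple of it, which is what makes the lemma (and hence Theorem~\ref{thm:main}, especially for truncated distributions) usable.
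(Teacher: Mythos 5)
Your proposal is correct and follows essentially the same route as the paper: reduce to the homogeneous case by adjoining a fresh sign $x_0$ (the paper's appendix does exactly this, at the cost of a constant factor $2$ rather than a ``lower-order'' mismatch --- this is where the $\tfrac{1}{2}$ in the first term of the stated bound comes from), then apply Shevtsova's optimal Berry--Esseen bound together with the Maclaurin expansion of $\Phi$, with the $\ell_n/\sqrt{2\pi}$ part of the error producing the $-A_i/B_i$ correction and the $3.4106\,\ell_n^{4/3}$ part producing the third term. Your two-sided treatment of the window $\Phi(t)-\Phi(-t)$ and your reading of the second error term as a fourth-moment ratio are equivalent (since $\ell_n^{4/3}=A_i^{4/3}/((n-1)^{2/3}B_i^2)$ directly) to the paper's one-sided computation via $\Phi(t)-\tfrac12$, so only constant-factor bookkeeping differs.
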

A crucial ingredient in the proof is the following optimal version of Berry-Esseen
Theorem 
that bounds the uniform distance
between the cumulative distribution function of
the standard normal distribution
$N(0,1)$ and
the standardized sum of independent symmetric Bernoulli random variables.
\begin{thm}[Corollary~4.19 in~\cite{shevtsova12}]
  \label{thm:shevtsova}
  Let $X_1,\ldots , X_n$ be independent symmetric Bernoulli random variables
  such that for all $1\leq j \leq n$ there exist $a_j$ with
$\Pr[X_j = a_j] = \Pr[X_j=-a_j] = \frac{1}{2}$.
If $\Phi(x)$ is the cumulative distribution function of the standard normal distribution, then,
  \[\sup_x\left|\Pr\left[\frac{X_1+\cdots +X_n}{\sqrt{\sum_{j=1}^n a_j^2}} < x\right] - \Phi(x)\right| \leq \frac{\ell_n}{\sqrt{2\pi}} + 3.4106 \cdot \ell_n^{4/3} \enspace ,\] 
where $\ell_n = (\sum_j |a_j|^3)/(\sum_j a_j^2)^{3/2}$.
\end{thm}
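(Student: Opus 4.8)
The plan is to prove this by the Fourier-analytic (characteristic-function) route, which is especially clean here because the characteristic function of $S = \sum_{j=1}^{n} a_j X_j$ is the explicit real product $f_n(t) = \E[e^{itS}] = \prod_{j=1}^{n}\cos(a_j t)$. After rescaling the $a_j$ we may assume $\sum_j a_j^2 = 1$, so that $\ell_n = \sum_j |a_j|^3$; since the left-hand side of the claim is always at most $1$ while $3.4106\,\ell_n^{4/3} > 1$ as soon as $\ell_n > 0.4$, we may also assume $\ell_n$ is below an absolute constant. In that regime $\max_j |a_j| = (\max_j |a_j|^3)^{1/3}\le \ell_n^{1/3}$, and hence $\sum_j a_j^{2k}\le (\max_j|a_j|)^{2k-3}\sum_j|a_j|^3\le \ell_n^{2k/3}$ for every $k\ge 2$; the case $k=2$, namely $\sum_j a_j^4\le \ell_n^{4/3}$, is the single elementary inequality responsible for the exponent $4/3$ in the conclusion.

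The first ingredient is a smoothing (Fourier inversion) inequality bounding $\Delta_n:=\sup_x|\Pr[S<x]-\Phi(x)|$ by an integral of $f_n(t)-e^{-t^2/2}$ against a fixed kernel over $|t|\le T$, plus a remainder of order $1/T$. Esseen's classical inequality, with $T\asymp 1/\ell_n$, already yields $\Delta_n\le C_1\ell_n$ for an absolute $C_1$, but its remainder $\tfrac{24}{\pi T}\sup_x\Phi'(x)$ carries too large a numerical constant to give the sharp leading coefficient; to obtain exactly $\tfrac{1}{\sqrt{2\pi}}=\sup_x\Phi'(x)$ one must instead use the sharper one-sided inversion inequality of Prawitz, following \cite{shevtsova12}. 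That the coefficient $\tfrac{1}{\sqrt{2\pi}}$ is genuinely forced (so it cannot be absorbed into the $\ell_n^{4/3}$ term) is visible from the extremal case $a_1=\cdots=a_n=1/\sqrt n$: there $\ell_n=1/\sqrt n$, the distribution of $S$ has a jump of size $\binom{n}{n/2}2^{-n}=(1+o(1))\sqrt{2/(\pi n)}$ at $0$, and half of this jump already gives $\Delta_n\ge (1+o(1))\,\ell_n/\sqrt{2\pi}$.

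The second ingredient is the estimate of $f_n(t)-e^{-t^2/2}$ in the integrand. On the central range of $t$ — where every $|a_j t|$ stays bounded away from $\pi/2$ — take logarithms and use $\log\cos u=-\tfrac{u^2}{2}-\tfrac{u^4}{12}-\tfrac{u^6}{45}-\cdots$, a series in which every coefficient is negative and no odd power appears; the absence of odd powers is exactly where the symmetry of the $X_j$ enters. Thus $\log f_n(t)=-\tfrac{t^2}{2}\sum_j a_j^2+R(t)=-\tfrac{t^2}{2}+R(t)$ with $R(t)=-\tfrac{t^4}{12}\sum_j a_j^4-\tfrac{t^6}{45}\sum_j a_j^6-\cdots\le 0$, and on the central range $|R(t)|\le C_2\,t^4\sum_j a_j^4$ because the successive terms are geometrically dominated (their ratios involve powers of $(\max_j|a_j|)\,|t|$, which is bounded there). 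Since $R(t)\le 0$ we get $|f_n(t)-e^{-t^2/2}|=e^{-t^2/2}\bigl(1-e^{R(t)}\bigr)\le e^{-t^2/2}|R(t)|\le C_2\,e^{-t^2/2}t^4\sum_j a_j^4$; dividing by $|t|$ and integrating against the convergent weight $e^{-t^2/2}|t|^3$ contributes $O\bigl(\sum_j a_j^4\bigr)=O(\ell_n^{4/3})$ to the Fourier integral. Outside the central range every problematic $|a_j t|$ is of constant order, so in the product $f_n(t)=\prod_j\cos(a_j t)$ the factors with small argument already force Gaussian decay $e^{-\beta t^2}$ while the rest are bounded by $1$; since the relevant $|t|$ is then $\gtrsim 1/\max_j|a_j|\ge \ell_n^{-1/3}$ (with $T<\pi/\ell_n$ chosen so as to stay below the first recurrence of $f_n$), this makes $f_n(t)$, hence $f_n(t)-e^{-t^2/2}$, exponentially small in a positive power of $1/\ell_n$ — a negligible contribution. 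The range $|t|>T$ was already accounted for by the $O(1/T)=O(\ell_n)$ remainder.

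Combining the three ranges gives $\Delta_n\le \tfrac{1}{\sqrt{2\pi}}\ell_n+C\,\ell_n^{4/3}$ for an absolute constant $C$, the complementary regime $\ell_n\gtrsim 1$ being trivial. The genuinely hard part — and the bulk of the work in \cite{shevtsova12} — is to bring $C$ down to the stated value $3.4106$: this needs the sharp form of Prawitz's inequality, the best possible elementary bounds $\log\cos u\le -\tfrac{u^2}{2}-\tfrac{u^4}{12}$ and the matching estimates on $|\cos u|$ off the central range, and a numerically optimized choice of $T$ together with the thresholds separating the central, transitional and tail ranges — where there is a real tension, since a smaller $T$ inflates the $1/T$ remainder while a larger $T$ forces weaker characteristic-function bounds as some $|a_j t|$ approaches $\pi$. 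I would carry out exactly this optimization; everything else is the routine bookkeeping sketched above.
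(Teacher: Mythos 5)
The paper does not prove this statement: it is imported verbatim as Corollary~4.19 of Shevtsova's paper~\cite{shevtsova12}, so there is no ``paper's own proof'' to compare against. Your sketch is nevertheless a faithful outline of how such a bound is actually obtained in~\cite{shevtsova12} and the surrounding literature: the characteristic-function route with a Prawitz-type one-sided smoothing inequality (Esseen's lemma cannot produce the leading constant $1/\sqrt{2\pi}$, as you correctly observe), the normalization $\sum_j a_j^2 = 1$ with $\ell_n = \sum_j|a_j|^3$, the all-negative Maclaurin series for $\log\cos u$ which is where symmetry enters (so $R(t)\le 0$ and $1-e^{R(t)}\le |R(t)|$), the elementary inequality $\sum_j a_j^4 \le \ell_n^{4/3}$ that is the sole source of the exponent $4/3$, and the symmetric-binomial extremal case showing $\Delta_n \ge (1+o(1))\,\ell_n/\sqrt{2\pi}$ via the half-jump at the origin. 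All of these steps are correct as stated and are the genuine skeleton of the argument.

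The one caveat --- which you flag yourself --- is that your argument, pushed to completion as written, delivers $\Delta_n \le \ell_n/\sqrt{2\pi} + C\,\ell_n^{4/3}$ for \emph{some} absolute $C$, not for $C = 3.4106$. The specific value $3.4106$ is the entire content of the cited corollary beyond the soft estimate, and it comes out of the numerically optimized choice of the inversion parameter $T$, the careful elementary bounds on $|\cos u|$ across the central/transitional/tail ranges, and the threshold $\ell_n \lesssim 0.4$ below which the estimate must hold. You defer this (``I would carry out exactly this optimization''), which is fine for a sketch but means the stated theorem, with its precise constant, is not actually established by the proposal; it is reduced, correctly, to a computation you have not carried out.
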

The optimality is in the fact that the constant of $1/\sqrt{2\pi}$ (as a coefficient of $\ell_n$) is the best possible.

We will prove Theorem~\ref{thm:mainlb} 
in two stages. First we establish it 
under the assumption that $w_0 = 0$.  
Then, in Section~\ref{subsec:unbal}, we reduce
the non-homogeneous case ($w_0 \neq 0$) to the homogeneous case ($w_0 = 0$)
to complete the proof. 

%
%

\subsection{Proof of Theorem~\ref{thm:mainlb} (assuming $w_0 = 0$)}
\label{subsec:balanced-fixed-weights}

As mentioned earlier, the full proof of Theorem~\ref{thm:mainlb} will follow
from the discussions in Section~\ref{subsec:unbal}.
Thus, for some $w_1,\ldots, w_n \in \reals$,
we have $f(x) = \sign(\sum_{i=1}^n w_ix_i)$ for all $x \in \sbool^n$.
Relaxing the estimate in Eq.~\eqref{eq:inf-ltf}, we have 
\[\inf_i(f)
\geq \frac{1}{2}\Pr_{x \in \sbool^n}\left[\left|\sum_{j  \leq n \colon j \neq i} w_jx_j\right| \leq  |w_i| \right] \enspace .\]
To obtain the claimed lower bound on influence, we further lower bound the
above expression. 
We start with some definitions required
for the proof. 

For any $1\leq i\leq n$, we define
\[A_i:=\frac{1}{n-1}\sum_{j\neq i}|w_j|^3,\;\; \mbox{ and } \;\;\; B_i := \frac{1}{n-1}\sum_{j\neq i} |w_j|^2.\]

Note that the proof of Theorem~\ref{thm:mainlb}, for the case
when $w_0 = 0$, follows by substituting $\alpha = |w_i|$ in the following lemma.

\begin{lem}
  \label{lem:mainlb}
  For all $i \in [n]$ and any $\alpha \in \mathbb{R}^+$,
  \[ \frac{1}{2}\Pr_{x \in \sbool^n}\left[\left|\sum_{j \leq n \colon j \neq i} w_jx_j\right| \leq \alpha \right] \geq \frac{1}{\sqrt{2\pi (n-1)B_i}}\left(\alpha - \frac{A_i}{B_i}\right)  - \frac{\alpha^3}{6\sqrt{2\pi}((n-1)B_i)^{3/2}}  - \frac{3.4106~A_i^{4/3}}{(n-1)^{2/3}B_i^2} \enspace .\]
\end{lem}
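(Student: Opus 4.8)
The plan is to write the probability as an expectation over the random sign $x_i$ of the variable we fix out, and then apply the Berry-Esseen bound of Theorem~\ref{thm:shevtsova} to the remaining sum $S := \sum_{j \neq i} w_j x_j$ of $n-1$ independent symmetric Bernoulli random variables. First I would observe that $S/\sqrt{(n-1)B_i}$ has the same normalization as in Theorem~\ref{thm:shevtsova} with $a_j = w_j$, so that $\ell_{n-1} = (\sum_{j\neq i}|w_j|^3)/(\sum_{j\neq i}w_j^2)^{3/2} = A_i/\big((n-1)^{1/2} B_i^{3/2}\big)$. Writing $\Pr[|S| \le \alpha] = \Pr[S < \alpha] - \Pr[S < -\alpha] + \Pr[S = -\alpha]$ and dropping the nonnegative point-mass term, this is at least $\Phi(\alpha/\sqrt{(n-1)B_i}) - \Phi(-\alpha/\sqrt{(n-1)B_i}) - 2\big(\ell_{n-1}/\sqrt{2\pi} + 3.4106\,\ell_{n-1}^{4/3}\big)$, where the factor $2$ comes from applying the uniform Berry-Esseen estimate at both endpoints $\pm\alpha/\sqrt{(n-1)B_i}$.

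Next I would lower-bound the Gaussian gap $\Phi(t) - \Phi(-t) = 2\int_0^t \phi(u)\,du$ with $t = \alpha/\sqrt{(n-1)B_i}$. Using $\phi(u) = e^{-u^2/2}/\sqrt{2\pi} \ge (1 - u^2/2)/\sqrt{2\pi}$, integration gives $\Phi(t)-\Phi(-t) \ge \frac{2}{\sqrt{2\pi}}\big(t - t^3/6\big) = \frac{2\alpha}{\sqrt{2\pi(n-1)B_i}} - \frac{2\alpha^3}{6\sqrt{2\pi}((n-1)B_i)^{3/2}}$. Then I would substitute the expression for $\ell_{n-1}$ into the error term: $\frac{2}{\sqrt{2\pi}}\ell_{n-1} = \frac{2A_i}{\sqrt{2\pi}(n-1)^{1/2}B_i^{3/2}}$ and $2\cdot 3.4106\,\ell_{n-1}^{4/3} = 2\cdot 3.4106\, A_i^{4/3}/\big((n-1)^{2/3}B_i^{2}\big)$. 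Dividing the whole inequality by $2$ yields exactly the three claimed terms: the main term $\frac{1}{\sqrt{2\pi(n-1)B_i}}\big(\alpha - \frac{A_i}{B_i}\big)$ (combining $\frac{\alpha}{\sqrt{2\pi(n-1)B_i}}$ with $\frac{A_i}{\sqrt{2\pi}(n-1)^{1/2}B_i^{3/2}} = \frac{A_i/B_i}{\sqrt{2\pi(n-1)B_i}}$), the cubic correction $\frac{\alpha^3}{6\sqrt{2\pi}((n-1)B_i)^{3/2}}$, and the $\ell^{4/3}$ error $\frac{3.4106\,A_i^{4/3}}{(n-1)^{2/3}B_i^{2}}$.

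The routine part is the bookkeeping of constants and powers of $n-1$; the one place requiring a little care is making sure the factor of $2$ from the two-endpoint application of Berry-Esseen cancels cleanly against the $\frac12$ in front of the probability, and that the elementary inequality $\phi(u)\ge(1-u^2/2)/\sqrt{2\pi}$ is used in the right direction (it is a genuine lower bound for all $u$, so no restriction on $t$ is needed, although the bound is only useful when the right-hand side is positive). I expect the main conceptual obstacle is simply recognizing that $|w_i|$ plays no special role in this lemma — it is just the free parameter $\alpha$ — so the entire content is the Berry-Esseen approximation plus the Taylor lower bound on the Gaussian CDF; there is no probabilistic subtlety beyond discarding the point mass $\Pr[S=-\alpha]\ge 0$.
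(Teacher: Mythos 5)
Your proposal is correct and matches the paper's proof in all essentials: both apply Shevtsova's Berry--Esseen bound (Theorem~\ref{thm:shevtsova}) with $a_j=w_j$ and then lower-bound the Gaussian term via the truncated expansion $\Phi(t)-\tfrac12\ge(t-t^3/6)/\sqrt{2\pi}$, arriving at identical constants. The only (immaterial) difference is bookkeeping: the paper first uses symmetry to reduce to the one-sided event $0\le S\le\alpha$ and invokes Berry--Esseen once, whereas you invoke it at both endpoints $\pm\alpha$ and let the resulting factor $2$ cancel against the leading $\tfrac12$.
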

\begin{proof}
  For any $\alpha>0$, by the symmetry of distribution over $\sum w_jx_j$,
  we have
  \begin{align*}
    \frac{1}{2}\Pr_x\left[\left|\sum_{j \leq n \colon j \neq i} w_jx_j \right| \leq \alpha \right] & = \Pr_x\left[0 \leq \sum_{j \leq n \colon j \neq i} w_jx_j \leq \alpha \right]  \\
    & = \Pr_x\left[\frac{\sum_{j \leq n \colon j \neq i} w_jx_j}{\sqrt{\sum_{j \leq n \colon j \neq i} w_j^2}}\leq \frac{\alpha}{\sqrt{\sum_{j \leq n \colon j \neq i} w_j^2}} \right] - \frac{1}{2} \enspace . 
  \end{align*}
  Using Theorem~\ref{thm:shevtsova} on random variables $w_jx_j$'s,
  we obtain the following lower bound
  \begin{align}
    \label{eq:main-eq}
    \frac{1}{2}\Pr_x\left[\left|\sum_{j \leq n \colon j \neq i} w_jx_j \right| \leq \alpha \right] & \geq \Phi \left(\frac{\alpha}{\sqrt{\sum_{j \leq n \colon j \neq i} w_j^2}}\right) -\left(\frac{1}{2}\right) - \frac{L_{(n-1)}}{\sqrt{2\pi}} - 3.4106 \cdot L_{(n-1)}^{4/3} \enspace, 
  \end{align}
  where
  \[ L_{(n-1)} = \frac{\sum_{j \leq n \colon j \neq i} |w_j|^3}{(\sum_{j \leq n \colon j \neq i} |w_j|^2)^{3/2}}=\frac{(n-1)A_i}{((n-1)B_i)^{3/2}} \enspace .\]
Thus we have 
  \begin{align}
    \label{eq:part2}
    - \frac{L_{(n-1)}}{\sqrt{2\pi}} - 3.4106 \cdot L_{(n-1)}^{4/3} & = -\frac{1}{\sqrt{2\pi}}\cdot  \frac{A_i}{\sqrt{(n-1)}B_i^{3/2}} - 3.4106\frac{A_i^{4/3}}{(n-1)^{2/3}B_i^2} \enspace .  
  \end{align}
  Also, using the following Maclaurin series
  \[\Phi(x) - \frac{1}{2} = \frac{1}{\sqrt{2\pi}} \left(x - \frac{1}{6}x^3 + \frac{1}{40}x^5 - \cdots\right) \enspace ,\]
  we have
  \begin{align}
    \label{eq:part1}
    \Phi \left(\frac{\alpha}{\sqrt{\sum_{j \leq n \colon j \neq i} w_j^2}}\right) -\frac{1}{2} \geq  \frac{\alpha}{\sqrt{2\pi B_i(n-1)}} - \frac{\alpha^3}{6\sqrt{2\pi}((n-1)B_i)^{3/2}} \enspace .
  \end{align}

Using equations~\eqref{eq:part1} and \eqref{eq:part2} in
  Equation~\eqref{eq:main-eq}, we obtain 
  \begin{align*}
    \frac{1}{2}\Pr_x\left[\left|\sum_{j \leq n \colon j \neq i} w_jx_j \right| \leq \alpha \right] & \geq \frac{1}{\sqrt{2\pi(n-1) B_i}}\left( \alpha-\frac{A_i}{B_i}\right) - \frac{\alpha^3}{6\sqrt{2\pi}((n-1)B_i)^{3/2}}  - \frac{3.4106~A_i^{4/3}}{(n-1)^{2/3}B_i^2} \enspace . 
  \end{align*}
\qed\end{proof}

In our applications of the lemma, the second and third terms will be negligible
compared to the first term. 

\subsection{Fourier Entropy Influence Conjecture for a class of Linear Threshold Functions}
In this section we identify a class of LTFs for which we prove
the FEI conjecture. The class of functions is $\tau$-regular functions. 
\begin{defi}
  Suppose $w_0, w_1, \dots, w_n$ are real numbers such that $\sum_{i=0}^n w_i^2 = 1$. If for all $0\leq i\leq n$, $|w_i|\leq \tau$ then the linear threshold function, $f$ defined as
  $f(x_1, \dots, x_n) = \sign(w_0 + \sum_{i=1}^n w_ix_i)$ is called a $\tau$-regular LTF.
\end{defi}
From Theorem~\ref{thm:InfKhint} and Theorem~\ref{thm:entub} we can show that $\tau$-regular LTFs satisfy the FEI conjecture. 
\begin{coro}
  \label{cor:tau-regular}
  If $\tau \leq c/\sqrt{n}$ and if $f$ is a $\tau$-regular function then 
  $$\mathbb{H}(f)\leq O(\inf(f)),$$ where the constant in the Big-oh notation depends on $c$. 
\end{coro}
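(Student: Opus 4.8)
The plan is to combine the two main black boxes from the excerpt: the influence lower bound of Theorem~\ref{thm:InfKhint}, which gives $\inf(f) \ge \frac{\sqrt{\sum_i w_i^2}}{2\sqrt2\,\max_i|w_i|} - 1$, and the Fourier-entropy upper bound of Theorem~\ref{thm:entub}, which gives $\ent(f) \le C\sqrt n$ for \emph{any} LTF. First I would observe that for a $\tau$-regular $f$ we have normalized weights, $\sum_{i=0}^n w_i^2 = 1$, and $\max_i |w_i| \le \tau \le c/\sqrt n$. Plugging this into Theorem~\ref{thm:InfKhint} yields
\[
\inf(f) \;\ge\; \frac{1}{2\sqrt2\,\tau} - 1 \;\ge\; \frac{\sqrt n}{2\sqrt2\,c} - 1.
\]
So the influence is already $\Omega(\sqrt n)$, with the hidden constant depending only on $c$.

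Next I would chain this with Theorem~\ref{thm:entub}: since $\ent(f) \le C\sqrt n$ and $\inf(f) \ge \frac{\sqrt n}{2\sqrt 2\, c} - 1$, we get, for $n$ large enough that $\frac{\sqrt n}{2\sqrt2 c} - 1 \ge \frac{\sqrt n}{4\sqrt 2 c}$ (i.e.\ $\sqrt n \ge 4\sqrt 2 c$),
\[
\ent(f) \;\le\; C\sqrt n \;\le\; C\cdot 4\sqrt2\, c \cdot \inf(f) \;=\; O(\inf(f)),
\]
with the constant depending on $c$ (and the universal constant $C$). For the finitely many small $n$ not covered by this, the bound $\ent(f) = O(\inf(f))$ holds trivially since $\ent(f) \le n$ and any non-constant $f$ has $\inf(f) \ge 2^{-n+1}$ or so — actually it is cleaner to just note that for $n$ below the threshold, both quantities are bounded by absolute constants depending on $c$, so the inequality holds after possibly enlarging the constant. (One should also dispose of the degenerate case $\inf(f) = 0$, i.e.\ $f$ constant, where $\ent(f) = 0$ and the inequality is vacuous; a $\tau$-regular $f$ with $\tau$ small and $n$ large cannot be constant anyway since its influence is $\Omega(\sqrt n) > 0$.)

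I do not expect any serious obstacle here — the corollary is essentially an arithmetic combination of two already-established theorems. The only mild subtlety is bookkeeping: making sure the "$-1$" and "$-|\fhat(\emptyset)|$" error terms in Theorem~\ref{thm:InfKhint} are absorbed correctly (they are lower-order compared to $\Omega(\sqrt n)$), and handling the small-$n$ regime so that the $O(\cdot)$ statement is literally true for all $n$ rather than just asymptotically. Both are routine. If one wanted the constant to be fully explicit, one would track $C$ from \cite{CKLS16} and the factor $2\sqrt2$, but the statement only asks for a $c$-dependent constant, so the argument above suffices.
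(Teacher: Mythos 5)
Your proposal is correct and follows exactly the route the paper intends: the paper states that the corollary follows by combining Theorem~\ref{thm:InfKhint} (which for normalized weights with $\max_i|w_i|\leq c/\sqrt{n}$ gives $\inf(f)=\Omega(\sqrt{n})$) with the $O(\sqrt{n})$ entropy bound of Theorem~\ref{thm:entub}. Your bookkeeping for the $-1$ term and the small-$n$ regime is routine and unobjectionable.
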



%
%
%
%

\section{Lower Bounds on Influence for Random Linear Threshold Functions}
\label{sec:lb-inf-random}

Given that we still cannot prove the FEI conjecture for all Linear Threshold
Functions a natural question is whether FEI conjecture holds
for a ``random'' LTF. 

Suppose $w_0, w_1, \dots, w_n$ are drawn independently from a distribution
$\calD$. 
Consider the function $f(x)$ where 
\[\forall x=x_1, \dots, x_n \in \sbool^n \colon  f(x) = \sign (w_0 + w_1x_1 + \cdots + w_nx_n).\]
Without loss of generality we can assume that the distribution is symmetric around the origin and thus 
the mean of the distribution $\mu(\calD)$ is $0$.
Also since the function remains same even if 
we scale the $w_i$'s by any value, so we can also assume that the variance of the distribution $\sigma(\calD)$ is $1$. 

As a step towards proving the FEI conjecture for $f$ we need to lower bound the influence of $f$. 
Note that Theorem~\ref{thm:InfKhint} 
and Theorem~\ref{thm:mainlb} give lower bounds on the 
influence of $f$ in terms of $w_i$'s. Using similar arguments we can obtain a lower bound on the influence that 
holds for a random $f$ (that is, when $w_i$'s are drawn from a distribution $\calD$) with high probability. 

For any $n \in \mathbb{N}$,
let $w_1,\ldots , w_n  \sim \calD$ be the outcome
of $n$ independent samples according to $\calD$. 
 We define, for any $\alpha \in \reals^{+}$,   
$p_{\calD,n}(\alpha)=\Pr[\max_{i=1}^n\{|w_i|\} \geq \alpha]$. 

\begin{coro}[Corollary of Theorem~\ref{thm:InfKhint}]\label{cor:random-Khint}
For any $\alpha\in \reals^{+}$, with probability at least $1-p_{\calD,n+1}(\alpha) - o(1)$, 
\[\inf(f) \geq \Omega(\sqrt{n}/\alpha).\]
\end{coro}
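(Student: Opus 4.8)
The plan is to derive Corollary~\ref{cor:random-Khint} directly from Theorem~\ref{thm:InfKhint} by controlling the two random quantities that appear in that bound: the $\ell_2$-norm $\sqrt{\sum_{i=0}^n w_i^2}$ of the weight vector, and the maximum $\max_i |w_i|$. Recall that Theorem~\ref{thm:InfKhint} gives
\[\inf(f) \geq \frac{\sqrt{\sum_{i=0}^n w_i^2}}{2\sqrt{2}\,\max_i |w_i|} - 1,\]
so it suffices to show that with probability $1 - p_{\calD,n+1}(\alpha) - o(1)$ both (i) $\max_i |w_i| < \alpha$ and (ii) $\sqrt{\sum_{i=0}^n w_i^2} = \Omega(\sqrt{n})$. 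Step (i) is immediate from the definition of $p_{\calD,n+1}(\alpha)$: since we have $n+1$ weights $w_0,\ldots,w_n$, the event $\max_i|w_i| \geq \alpha$ has probability exactly $p_{\calD,n+1}(\alpha)$, so its complement holds with probability $1 - p_{\calD,n+1}(\alpha)$.

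For step (ii), I would use the fact that $\calD$ is centered with variance $1$, so $\E[\sum_{i=0}^n w_i^2] = n+1$. The quantity $\sum_{i=0}^n w_i^2$ is a sum of $n+1$ i.i.d.\ nonnegative random variables, each with mean $1$, so by the weak law of large numbers (or Chebyshev's inequality, if one additionally assumes a finite fourth moment of $\calD$ so that $w_i^2$ has finite variance) we get $\sum_{i=0}^n w_i^2 \geq (n+1)/2$ with probability $1 - o(1)$. Hence $\sqrt{\sum_{i=0}^n w_i^2} \geq \sqrt{(n+1)/2}$ with probability $1-o(1)$.

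Combining the two events by a union bound, with probability at least $1 - p_{\calD,n+1}(\alpha) - o(1)$ we simultaneously have $\max_i|w_i| < \alpha$ and $\sqrt{\sum_{i=0}^n w_i^2} \geq \sqrt{(n+1)/2}$, and plugging these into Theorem~\ref{thm:InfKhint} yields
\[\inf(f) \geq \frac{\sqrt{(n+1)/2}}{2\sqrt{2}\,\alpha} - 1 = \Omega\!\left(\frac{\sqrt{n}}{\alpha}\right),\]
which is the claimed bound. The only mild subtlety is that the $\Omega(\cdot)$ in the conclusion must absorb the additive $-1$ (and any dependence on $\alpha$), which is fine as long as $\alpha = o(\sqrt{n})$ — otherwise the statement is vacuous anyway. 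The main thing to be careful about is the tail control in step (ii): if one wants to stay agnostic about higher moments of $\calD$, invoking the weak law of large numbers suffices for an $o(1)$ statement; if a more quantitative $e^{-\Omega(n)}$-type bound were desired one would need a concentration inequality and correspondingly stronger moment assumptions, but the statement as written only asks for $o(1)$, so this is not an obstacle.
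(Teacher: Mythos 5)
Your proposal is correct and matches the paper's argument: the paper likewise combines Theorem~\ref{thm:InfKhint} with the definition of $p_{\calD,n+1}(\alpha)$ to control $\max_i|w_i|$, and uses concentration (Bernstein's inequality in the paper, versus your weak law of large numbers/Chebyshev) to get $\sum_i w_i^2 = \Omega(n)$ with probability $1-o(1)$. The choice of concentration tool is immaterial here since only an $o(1)$ failure probability is claimed.
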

Using Bernstein's inequality it is easily seen that
$\sum_iw_i^2 = \Omega(n)$ with probability $1 - o(1)$. 
\begin{thm}\label{thm:lb_random}
  For any $\alpha \in \reals^{+}$ and any $\delta > 0$,
  with probability at least
  $1 - e^{-\delta^2 n \mu_3^2/\sigma_3} - 2e^{-\delta^2 n/\sigma_2} - 2e^{-(n/4) p_{\calD,1}(\alpha)^2}$ over the choices of $w_j$'s 
\[\inf(f) \geq p_{\calD,1}(\alpha) \cdot \Theta(\alpha) \cdot \sqrt{n},\]
where 
$\Theta(\alpha) =  \left(\alpha - \mu_3(1+\frac{2\delta}{1-\delta})\right)/\sqrt{2\pi (1 + \delta)}$, $\mu_3 = \E_{w\sim \calD}[|w|^3]$, and $\sigma_2$ and $\sigma_3$ are the standard deviations of $w^2$ and $|w|^3$, respectively. 
\end{thm}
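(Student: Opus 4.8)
\medskip

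The plan is to combine the per-variable lower bound of Theorem~\ref{thm:mainlb} with concentration of the relevant empirical quantities, and then to restrict attention to the ``heavy'' coordinates. First I would recall from Theorem~\ref{thm:mainlb} (equivalently Lemma~\ref{lem:mainlb} with $\alpha = |w_i|$) that for each $i$,
\[
\inf_i(f) \;\geq\; \frac{1}{2\sqrt{2\pi(n-1)B_i}}\Bigl(|w_i| - \tfrac{A_i}{B_i}\Bigr) \;-\; \frac{|w_i|^3}{6\sqrt{2\pi}((n-1)B_i)^{3/2}} \;-\; \frac{3.4106\,A_i^{4/3}}{(n-1)^{2/3}B_i^2},
\]
where $A_i = \frac{1}{n-1}\sum_{0\le j\neq i\le n}|w_j|^3$ and $B_i = \frac{1}{n-1}\sum_{0\le j\neq i\le n}w_j^2$. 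Since the $w_j$ are i.i.d. draws from $\calD$ with $\E[w^2]=1$ and $\E[|w|^3]=\mu_3$, I would apply a standard concentration bound (Bernstein's or Hoeffding's, exactly as the paper already does for $\sum w_i^2$) to the sums $\sum|w_j|^2$ and $\sum|w_j|^3$: with probability at least $1 - 2e^{-\delta^2 n/\sigma_2}$ we have $B_i \in [1-\delta,\,1+\delta]$ for all $i$ simultaneously, and with probability at least $1 - e^{-\delta^2 n \mu_3^2/\sigma_3}$ we have $A_i \le \mu_3(1+\delta)$ (the asymmetry in the two tail bounds just reflects that we only need a one-sided bound on $A_i$ and that its natural scale is $\mu_3$). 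On this good event, $A_i/B_i \le \mu_3\frac{1+\delta}{1-\delta} = \mu_3(1 + \frac{2\delta}{1-\delta})$, and $B_i \le 1+\delta$, so the first term of the displayed bound is at least $\frac{1}{2\sqrt{2\pi(n-1)(1+\delta)}}\bigl(|w_i| - \mu_3(1+\frac{2\delta}{1-\delta})\bigr) = \frac{\Theta(|w_i|)}{\sqrt{n-1}}\cdot\frac{1}{2}$ up to the exact constants, matching the quantity $\Theta(\alpha)$ in the statement; the second and third ``error'' terms are $O(1/n^{3/2})$ and $O(1/n^{2/3})$ respectively and are dominated by the main term once $|w_i|$ exceeds $\mu_3$ by a constant, as the remark after the Lemma in the introduction already notes.

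\medskip

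Next I would sum over only those coordinates $i\in[n]$ with $|w_i|\ge\alpha$. Call this set $S$; each $w_i$ lands in $S$ independently with probability exactly $p_{\calD,1}(\alpha)$, so by a Chernoff/Hoeffding bound $|S| \ge \frac{n}{2}p_{\calD,1}(\alpha)$ with probability at least $1 - e^{-(n/4)p_{\calD,1}(\alpha)^2}$ (and one may need a second such bound, hence the factor $2$ in the exponent of the theorem, to also control any correlated event). For $i\in S$ we have $\Theta(|w_i|)\ge\Theta(\alpha)$ since $\Theta(\cdot)$ is increasing, so on the intersection of all the good events,
\[
\inf(f) \;\geq\; \sum_{i\in S}\inf_i(f) \;\geq\; |S|\cdot\frac{\Theta(\alpha)}{\Theta(\sqrt{n})} \;\geq\; \frac{n}{2}p_{\calD,1}(\alpha)\cdot\Omega\!\left(\frac{\Theta(\alpha)}{\sqrt{n}}\right) \;=\; p_{\calD,1}(\alpha)\cdot\Theta(\alpha)\cdot\Theta(\sqrt{n}),
\]
which is the claimed bound. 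Finally, a union bound over the three bad events — the $A_i$ concentration failing, the $B_i$ concentration failing, and $|S|$ being too small — gives the stated failure probability $e^{-\delta^2 n\mu_3^2/\sigma_3} + 2e^{-\delta^2 n/\sigma_2} + 2e^{-(n/4)p_{\calD,1}(\alpha)^2}$.

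\medskip

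The step I expect to require the most care is \emph{making the per-coordinate bounds hold simultaneously for all $i$ while keeping the error terms negligible}. The quantities $A_i,B_i$ depend on $i$ (they omit $w_i$), so strictly one wants a single high-probability event on which $\frac{1}{n}\sum_j |w_j|^2$ and $\frac{1}{n}\sum_j|w_j|^3$ are both well-concentrated, and then note that removing one term changes these averages by $O(1/n)$, which is absorbed into the slack $\delta$. One must also verify that for $i\in S$ the two negative error terms in Theorem~\ref{thm:mainlb} are genuinely dominated — this is where it matters that $\alpha$ can be taken only an additive constant above $\mu_3$ (the point emphasized right after the informal lemma): the first term scales like $(\alpha-\mu_3(1+o(1)))/\sqrt n$, so as long as $\alpha-\mu_3$ is bounded below by a positive constant, it beats the $O(n^{-2/3})$ error term for $n$ large, and this is exactly the regime in which $p_{\calD,1}(\alpha)$ is still a positive constant for the distributions of interest (truncated uniform, normal). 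Everything else is a routine union bound and the monotonicity of $\Theta$.
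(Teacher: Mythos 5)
Your proposal matches the paper's proof essentially step for step: the paper likewise combines Lemma~\ref{lem:mainlb} with Bernstein-type concentration of $A_i$ and $B_i$ (yielding $A_i/B_i \leq \mu_3(1+\frac{2\delta}{1-\delta})$ on the good event), restricts the sum to the coordinates with $|w_i|\geq\alpha$, controls their number $N_\alpha \geq \frac{n}{2}p_{\calD,1}(\alpha)$ via a Chernoff bound, and finishes with a union bound over the three bad events. The points you flag as delicate (simultaneity over $i$ and the error terms being dominated once $\alpha$ exceeds $\mu_3$ by an additive constant) are exactly the ones the paper handles, so no gap.
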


\subsection{Proof of Theorem~\ref{thm:lb_random}}\label{subsec:balanced}
As argued in Section~\ref{subsec:unbal}, it suffices to establish
the lower bound on influence when $w_0 = 0$. To establish the theorem
we argue similarly as in the proof of Lemma~\ref{lem:mainlb}. 
We consider the case when the distribution $\calD$ over $\reals$ is symmetric around $0$ and with variance $1$.
That is, $\E_{w \sim \calD}[w^2] = 1$.
Let $\mu_3 := \E_{w \sim \calD}[|w|^3]$.
Further, $\sigma_2$ and $\sigma_3$ denote the standard deviation
of $w^2$ and $|w|^3$, respectively. 
That is, $\sigma_2^2 = \E[(w^2-\E[w^2])^2]$
and $\sigma_3^2 = \E[(|w|^3 - \E[|w|^3])^2 ]$. 

The following lemma gives a lower bound on the $i$-th influence. 
\begin{lem}
  \label{lem:bound-rademacher-sums-random}
  Let $w_j \sim \calD$ for $1 \leq j \leq n$. For all $i \in [n]$,
  $\alpha\in \reals^+$, and $\delta > 0$,  with probability at 
  least $1 - e^{-\delta^2 n \mu_3^2/\sigma_3} - 2e^{-\delta^2 n/\sigma_2}$
  over the choices of $w_j$'s,
  \[ \Pr_{x \in \sbool^n}\left[\left|\sum_{1\leq j \leq n \colon j \neq i} w_jx_j\right| \leq  \alpha\right] \geq \frac{\theta}{\sqrt{n}} - O\left(\frac{1}{n^{2/3}}\right),\]
  where $\theta = (\alpha - \mu_3(1+ \frac{2\delta}{1-\delta}))/\sqrt{2\pi (1+\delta)}$.
\end{lem}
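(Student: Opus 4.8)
The plan is to mimic the proof of Lemma~\ref{lem:mainlb}, but replace the deterministic quantities $A_i$ and $B_i$ by their typical values when the $w_j$'s are random, controlling the deviations by concentration inequalities. Recall that for fixed weights, Lemma~\ref{lem:mainlb} gives
\[
\tfrac12\Pr_x\!\left[\Big|\textstyle\sum_{j\neq i} w_jx_j\Big|\le\alpha\right]
\ \ge\ \frac{1}{\sqrt{2\pi(n-1)B_i}}\Big(\alpha-\frac{A_i}{B_i}\Big)
-\frac{\alpha^3}{6\sqrt{2\pi}((n-1)B_i)^{3/2}}
-\frac{3.4106\,A_i^{4/3}}{(n-1)^{2/3}B_i^2},
\]
where $A_i=\frac1{n-1}\sum_{j\neq i}|w_j|^3$ and $B_i=\frac1{n-1}\sum_{j\neq i}|w_j|^2$. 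Since $\calD$ is symmetric with variance $1$, we have $\E[B_i]=\E[w^2]=1$ and $\E[A_i]=\E[|w|^3]=\mu_3$. So the ``target'' value of the right-hand side is essentially $\frac{1}{\sqrt{2\pi n}}(\alpha-\mu_3)-O(n^{-2/3})$, which already has the shape of the claimed bound $\frac{\theta}{\sqrt n}-O(n^{-2/3})$ with the clean choice $\delta=0$; the role of $\delta$ is exactly to absorb the fluctuations of $A_i$ and $B_i$ around their means.

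First I would apply two concentration bounds. The variables $|w_j|^3$ (for $j\neq i$) are i.i.d.\ with mean $\mu_3$ and standard deviation $\sigma_3$; by a standard sub-Gaussian/Bernstein-type tail bound (or Hoeffding after an appropriate normalization), $\Pr[A_i > \mu_3(1+\delta)] \le e^{-\delta^2(n-1)\mu_3^2/\sigma_3}$ — this is the source of the $e^{-\delta^2 n\mu_3^2/\sigma_3}$ term. Similarly, the variables $w_j^2$ are i.i.d.\ with mean $1$ and standard deviation $\sigma_2$, so $\Pr[|B_i-1|>\delta]\le 2e^{-\delta^2(n-1)/\sigma_2}$, giving the $2e^{-\delta^2 n/\sigma_2}$ term. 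Union-bounding, with the stated probability we may assume simultaneously $A_i\le\mu_3(1+\delta)$ and $1-\delta\le B_i\le 1+\delta$.

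Next I would plug these inequalities into the bound from Lemma~\ref{lem:mainlb}, being careful about the direction of each monotonicity. In the main (positive) term $\frac{1}{\sqrt{2\pi(n-1)B_i}}(\alpha-\frac{A_i}{B_i})$: to get a valid lower bound we use $B_i\le 1+\delta$ in the prefactor $1/\sqrt{B_i}$, and in the bracket we lower-bound $\alpha-\frac{A_i}{B_i}$ by $\alpha-\frac{\mu_3(1+\delta)}{1-\delta}=\alpha-\mu_3\frac{1+\delta}{1-\delta}=\alpha-\mu_3(1+\frac{2\delta}{1-\delta})$ (valid once this quantity is nonnegative, which is the regime $\alpha>\mu_3+\delta'$ flagged after the lemma statement; otherwise there is nothing to prove). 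This produces precisely $\frac{\theta}{\sqrt{n-1}}\ge\frac{\theta}{\sqrt n}$ with $\theta=(\alpha-\mu_3(1+\frac{2\delta}{1-\delta}))/\sqrt{2\pi(1+\delta)}$, up to replacing $n-1$ by $n$ which only costs a lower-order term. For the two negative (error) terms, I use $B_i\ge 1-\delta$ to upper-bound $1/B_i$ and $A_i\le\mu_3(1+\delta)$, so that $\frac{\alpha^3}{6\sqrt{2\pi}((n-1)B_i)^{3/2}}=O(n^{-3/2})$ and $\frac{3.4106\,A_i^{4/3}}{(n-1)^{2/3}B_i^2}=O(n^{-2/3})$; these fold into the single $O(n^{-2/3})$ error term (the constants depending on $\alpha$, $\mu_3$, $\delta$ but not on $n$). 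Finally, multiplying through by $2$ converts the half-probability statement into the probability statement in the lemma.

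The main obstacle I anticipate is purely bookkeeping rather than conceptual: one must be scrupulous about which way each of $A_i$, $B_i$ should be bounded so that every substitution preserves the inequality, and must check that the resulting $\theta$ exactly matches the stated $(\alpha-\mu_3(1+\frac{2\delta}{1-\delta}))/\sqrt{2\pi(1+\delta)}$ — in particular that the $B_i\le 1+\delta$ used in the prefactor and the $B_i\ge 1-\delta$ used inside the bracket combine to give the factor $\frac{1+\delta}{1-\delta}=1+\frac{2\delta}{1-\delta}$ on $\mu_3$. A secondary point is justifying the concentration inequalities for $|w|^3$ and $w^2$ when $\calD$ may be unbounded (e.g.\ Gaussian), so that these are not literally Hoeffding variables; here one invokes the Bernstein/sub-exponential form of the tail bound, which is exactly what yields the $\mu_3^2/\sigma_3$ and $1/\sigma_2$ scaling in the exponents, and one notes that for the specific distributions in Theorem~\ref{thm:main} all moments are finite so the constants are harmless.
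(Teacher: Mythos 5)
Your proposal is correct and follows essentially the same route as the paper: apply Bernstein-type concentration to $A_i$ and $B_i$ to get $A_i/B_i \leq \mu_3(1+\frac{2\delta}{1-\delta})$ and $B_i \leq 1+\delta$ with the stated failure probabilities, then substitute into Lemma~\ref{lem:mainlb} and absorb the remaining terms into the $O(n^{-2/3})$ error. Your bookkeeping of which direction each bound on $A_i$ and $B_i$ must be used is, if anything, more explicit than the paper's.
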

\begin{proof}
To start we have $w_1, \dots, w_n$ that are independently and identically distributed according to $\mathcal{D}$. Define random variables $A_i = (\sum_{j\neq i} |w_j|^3)/(n-1)$ and $B_i = (\sum_{j\neq i} w_j^2)/(n-1)$. 
Thus $\E[A_i] = \mu_3$ and $\E[B_i] = 1$.  
Using Bernstein's inequality (see Corollary~2.11 in Chapter 2 of \cite{BLM13}) 
on independent symmetric random variables $w_j$'s
  we obtain
\begin{align}\label{eq:bound-on-squares}
   (1-\delta)(n-1) \leq \sum_{1 \leq j \leq n\colon j\neq i} w_j^2 \leq (1+\delta)(n-1) \enspace ,
\end{align}
  with probability at least $1 - 2e^{-\delta^2 n/\sigma_2}$, where $\sigma_2^2 = \mathrm{Var}_{w\sim \calD}(w^2)$. 
  Similarly, we also have
  \begin{align}
    \label{eq:bound-on-cubes}
    \sum_{1 \leq j \leq n \colon j \neq i} w_j^3 \leq (1+\delta)C(n-1) \enspace ,
  \end{align}
  with probability at least  $1 - e^{-2\delta^2 n \mu_3^2/\sigma_3}$, where $\sigma_3^2 = \mathrm{Var}_{w\sim \calD}(|w|^3)$. Thus, from the union bound,
  it follows that both equations,~\eqref{eq:bound-on-squares} and
  \eqref{eq:bound-on-cubes}, holds with probability at least $1 - e^{-\delta^2 n \mu_3^2/\sigma_3} - 2e^{-\delta^2 n/\sigma_2}$, and thus with this probabilty 
we have $(A_i/B_i) \leq \left(1 + \frac{2\delta}{1-\delta}\right)\mu_3.$ 
Recall from Lemma~\ref{lem:mainlb} we know 
\[\Pr_{x\in \sbool^n}\left[\left|\sum_{j\leq n: j\neq i}w_jx_j\right| \leq \alpha\right] \geq \frac{1}{\sqrt{2\pi (n-1)B_i}}\left(\alpha - \frac{A_i}{B_i}\right) - O\left(\frac{1}{n^{2/3}}\right) \enspace . \]
Thus plugging in the bound on $(A_i/B_i)$ in the above inequality we obtain
the lemma, 
\[\Pr_{x\in \sbool^n}\left[\left|\sum_{j\leq n: j\neq i}w_jx_j\right| \leq \alpha\right] \geq \frac{\alpha - (1+\frac{2\delta}{1-\delta})\mu_3}{\sqrt{2\pi (1+\delta)(n-1)}} - O\left(\frac{1}{n^{2/3}}\right) \enspace . \]
\qed\end{proof}

We now proceed to complete the proof of the theorem.
For any $\alpha \in \reals^+$,
\[\inf(f) = \sum_i \inf_i(f) \geq \sum_{i: w_i \geq \alpha} \inf_i(f) \geq \sum_{i: w_i \geq \alpha} \Pr_{x \in \sbool^n}\left[\left|\sum_{j \leq n \colon j \neq i} w_jx_j\right| \leq  \alpha\right] \enspace .\]
Thus, from Lemma~\ref{lem:bound-rademacher-sums-random}, we have with probabiltity at least $1 - e^{-\delta^2 n \mu_3^2/\sigma_3} - 2e^{-\delta^2 n/\sigma_2}$ over the choices of $w_j$'s,
\[\inf(f)  \geq N_{\alpha}\cdot \left(\frac{\alpha - \mu_3\left(1+ \frac{2\delta}{1-\delta}\right)}{\sqrt{2(n-1)\pi (1+\delta)}} - O\left(\frac{1}{n^{2/3}}\right)\right) \enspace, \]
where $N_{\alpha}$ is the number of $w_i$'s that are bigger than $\alpha$. 
Now the following lemma shows that with probability at least $1-2e^{-\frac{n}{4} p_{\calD,1}(\alpha)^2}$ we have $N_{\alpha} \geq \frac{n}{2}\cdot p_{\mathcal{D},1}(\alpha)$, and that establishes the theorem. 
\begin{lem}
  \label{lem:large-wt}
  Fix an $\alpha \in \reals^{+}$. 
  If we sample $n$ points according to $\calD$
  then with probability at least $1-2e^{-\frac{n}{4} p_{\calD,1}(\alpha)^2}$
  the number of points in the interval $[\alpha,\infty)$ is greater than
    $\frac{n\cdot p_{\calD,1}(\alpha)}{2}$ and
    less than $\frac{3n\cdot p_{\calD,1}(\alpha)}{2}$.
\end{lem}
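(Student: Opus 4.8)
The plan is to prove Lemma~\ref{lem:large-wt} by a routine two-sided Chernoff/Hoeffding argument applied to the indicator random variables counting how many of the $n$ sampled weights land in $[\alpha,\infty)$. First I would set, for each $j \in [n]$, the indicator $Y_j = \mathbf{1}[|w_j| \ge \alpha]$; since the $w_j$ are i.i.d.\ from $\calD$, the $Y_j$ are i.i.d.\ Bernoulli random variables with $\E[Y_j] = \Pr_{w\sim\calD}[|w|\ge\alpha]$. The key observation is that this expectation is exactly $p_{\calD,1}(\alpha)$: by definition $p_{\calD,n}(\alpha) = \Pr[\max_{i=1}^n |w_i| \ge \alpha]$, so for $n=1$ it is simply $\Pr[|w|\ge\alpha]$. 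Writing $N_\alpha = \sum_{j=1}^n Y_j$ for the count of weights in $[\alpha,\infty)$, we have $\E[N_\alpha] = n\, p_{\calD,1}(\alpha)$, so the claim $\frac{n p_{\calD,1}(\alpha)}{2} \le N_\alpha \le \frac{3 n p_{\calD,1}(\alpha)}{2}$ is exactly the statement that $N_\alpha$ lies within a factor $2$ of its mean in each direction, i.e.\ $|N_\alpha - \E[N_\alpha]| < \tfrac12 \E[N_\alpha]$.

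Next I would invoke the multiplicative Chernoff bound: for i.i.d.\ Bernoulli sums with mean $\mu = \E[N_\alpha]$ and any $0<\eta<1$, $\Pr[N_\alpha \le (1-\eta)\mu] \le e^{-\eta^2\mu/2}$ and $\Pr[N_\alpha \ge (1+\eta)\mu] \le e^{-\eta^2\mu/3}$. Taking $\eta = 1/2$ and $\mu = n\, p_{\calD,1}(\alpha)$ gives each tail probability at most $e^{-n p_{\calD,1}(\alpha)/8}$ (using the weaker exponent $\eta^2\mu/3$ for both, or the cleaner $e^{-\eta^2\mu/2}$); a union bound over the two tails yields a failure probability of at most $2 e^{-n p_{\calD,1}(\alpha)/8}$. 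To match the exponent $2 e^{-\frac{n}{4} p_{\calD,1}(\alpha)^2}$ claimed in the statement, I would instead phrase things additively: let $p := p_{\calD,1}(\alpha)$ and note that the event we want to avoid includes $|N_\alpha - np| \ge np/2$; by Hoeffding's inequality for sums of $n$ independent $[0,1]$-valued random variables, $\Pr[|N_\alpha - np| \ge t] \le 2 e^{-2t^2/n}$, and setting $t = np/2$ gives $\Pr \le 2 e^{-2(np/2)^2/n} = 2 e^{-np^2/2}$, which is even stronger than $2 e^{-np^2/4}$; hence the stated bound holds. (If one prefers the multiplicative form, $2 e^{-np/8} \le 2 e^{-np^2/4}$ whenever $p \le 1/2$, which is harmless since $p \le 1$ always and the regime of interest has $p$ small.)

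I do not expect any genuine obstacle here; this is the standard concentration-of-the-count step, and the only subtlety worth flagging explicitly is the identification $\E[Y_j] = p_{\calD,1}(\alpha)$, which is immediate from the definition of $p_{\calD,n}$ specialized to $n=1$. One minor point of care is the boundary convention ($|w|\ge\alpha$ versus $|w|>\alpha$): the definition of $p_{\calD,n}$ uses $\ge$, and the lemma statement speaks of points ``in the interval $[\alpha,\infty)$'', so the conventions agree and no adjustment is needed. Finally, I would remark that this is precisely the final ingredient invoked in the proof of Theorem~\ref{thm:lb_random}: combined with Lemma~\ref{lem:bound-rademacher-sums-random} and the union bound over the three failure events $e^{-\delta^2 n\mu_3^2/\sigma_3}$, $2e^{-\delta^2 n/\sigma_2}$, and $2e^{-(n/4)p_{\calD,1}(\alpha)^2}$, it gives $\inf(f) \ge N_\alpha \cdot \big(\theta/\sqrt{n} - O(n^{-2/3})\big) \ge \frac{n}{2}p_{\calD,1}(\alpha)\cdot\Theta(\alpha)/\sqrt{n}$, which is the desired $p_{\calD,1}(\alpha)\cdot\Theta(\alpha)\cdot\sqrt{n}$ up to the absorbed lower-order term.
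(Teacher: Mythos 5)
Your proposal is correct and follows essentially the same route as the paper: define the indicator variables for each sample landing in $[\alpha,\infty)$, observe their mean is $p_{\calD,1}(\alpha)$, and apply a Chernoff/Hoeffding concentration bound to the sum. Your additional care in choosing the additive Hoeffding form (which actually produces the $p_{\calD,1}(\alpha)^2$ in the exponent, unlike the multiplicative form) is a welcome clarification of a detail the paper leaves implicit.
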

\begin{proof}
  Let $\mathbf{1}_i$ be the indicator function of the event that
  the $i$-th sample
  lies in the interval $[\alpha,\infty)$.
    Define $X = \sum_{j=1}^n \mathbf{1}_j$.
    Note that for each $j$, $\E[I_j] = p_{\calD,1}(\alpha)$,
    so using Chernoff bound on $X$ we get, 
    \[ \Pr\left[\frac{n\cdot p_{\calD,1}(\alpha)}{2} \leq X \leq \frac{3n\cdot p_{\calD,1}(\alpha)}{2}\right] \geq 1 - 2e^{-\frac{1}{4}n( p_{\calD,1}(\alpha))^2}.\]
    \qed\end{proof}
    
\subsection{Fourier Entropy Conjecture for a Random Linear Threshold Function}
\label{subsec:lb-normal}
In this section, we give two natural examples of distributions on the $w_i$'s
under which the FEI conjecture holds with high probability.
First we consider the  uniform distribution on the closed interval $[-1,1]$, and
then the normal distribution $N(0,1)$. Together this completes the proof of Theorem~\ref{thm:main}.

\subsubsection{Uniform distribution on $[-1,1]$: $\mathcal{U}(-1,1)$}
\begin{coro}\label{cor:FEIUniform} 
If $w_i \sim \mathcal{U}(-1,1)$ for $i \in \{0, \ldots, n\}$, then the $FEI$ holds with high probability. 
\end{coro}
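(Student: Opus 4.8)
The plan is to use the \emph{simple} Khintchine-based lower bound (Theorem~\ref{thm:InfKhint}), which is strong enough in this case precisely because the uniform distribution on $[-1,1]$ is bounded: for $w_i \sim \mathcal{U}(-1,1)$ we have $|w_i| \le 1$ with probability $1$, so $\max_i |w_i| \le 1$ deterministically. Hence the only random quantity that needs to be controlled is the $\ell_2$-mass $\sum_{i=0}^n w_i^2$, and a routine concentration argument handles it. (Note that FEI is invariant under scaling all the $w_i$ by a common positive constant, so working with $\mathcal{U}(-1,1)$, which has variance $1/3$, rather than with its rescaling to variance $1$, is immaterial.)

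\textbf{Step 1 (control of $\ell_2$-mass).} Each $w_i^2$ is an independent random variable supported on $[0,1]$ with $\E[w_i^2] = \tfrac12\int_{-1}^1 t^2\,dt = \tfrac13$. By Hoeffding's inequality (or Bernstein's, as invoked after Corollary~\ref{cor:random-Khint}),
\[
\Pr\!\left[\sum_{i=0}^n w_i^2 < \frac{n+1}{6}\right] \le e^{-\Omega(n)} .
\]
So with probability $1 - e^{-\Omega(n)}$ we have $\sqrt{\sum_{i=0}^n w_i^2} \ge \sqrt{n/6}$.

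\textbf{Step 2 (influence lower bound).} On the event of Step~1, Theorem~\ref{thm:InfKhint} gives
\[
\inf(f) \ \ge\ \frac{\sqrt{\sum_{i=0}^n w_i^2}}{2\sqrt{2}\,\max_i |w_i|} - |\fhat(\emptyset)| \ \ge\ \frac{\sqrt{n/6}}{2\sqrt{2}} - 1 \ =\ \Omega(\sqrt{n}),
\]
since $\max_i|w_i| \le 1$ and $|\fhat(\emptyset)| \le 1$; the additive $-1$ is negligible against $\Omega(\sqrt n)$. Equivalently, this is Corollary~\ref{cor:random-Khint} with $\alpha = 1$, as $p_{\mathcal{U}(-1,1),\,n+1}(1) = \Pr[\max_i|w_i|\ge 1] = 0$ for a continuous distribution.

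\textbf{Step 3 (entropy upper bound and conclusion).} By Theorem~\ref{thm:entub}, \emph{every} LTF satisfies $\ent(f) \le C\sqrt{n}$ for a universal constant $C$. Combining with Step~2, with probability $1 - e^{-\Omega(n)}$,
\[
\ent(f) \ \le\ C\sqrt{n} \ =\ O(\inf(f)),
\]
which is the FEI inequality for $f$ with an absolute constant. This proves Corollary~\ref{cor:FEIUniform}.

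There is essentially no hard step here; the only point worth flagging is that the argument hinges on $\max_i |w_i|$ being a \emph{fixed constant} almost surely, which is exactly why the first, simpler technique suffices for $\mathcal{U}(-1,1)$ — in contrast to the unbounded normal case, where $\max_i|w_i|$ grows like $\sqrt{\log n}$ and one must instead fall back on the Berry–Esseen-based Theorem~\ref{thm:lb_random}.
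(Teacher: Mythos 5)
Your proof is correct and follows essentially the same route as the paper: a Chernoff/Hoeffding bound to show $\sum_{i} w_i^2 = \Omega(n)$, the Khintchine-based lower bound (Corollary~\ref{cor:random-Khint} with $\alpha=1$, exploiting $\max_i|w_i|\le 1$ almost surely) to get $\inf(f)=\Omega(\sqrt n)$, and Theorem~\ref{thm:entub} to conclude. Your write-up is somewhat more explicit than the paper's about why the bounded support of $\mathcal{U}(-1,1)$ makes the simpler technique suffice, but the argument is the same.
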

\begin{proof}
  Since $\E_{x\sim \mathcal{U}(-1,1)}[x^2] = 1/3$, when $w_0, \dots, w_n$ are drawn
  independently from $\mathcal{U}[-1,1]$, by Chernoff bound we have 
  with high probability 
  $\sum_{i=0}^n\E[w_i^2] = \Omega(n)$. Therefore, from Corollary~\ref{cor:random-Khint}, we have with high probability $\inf(f) = \Omega(\sqrt{n})$.
  Using the upper bound on the 
  Fourier entropy from Theorem~\ref{thm:entub} we have our result. 
\qed\end{proof}

\subsubsection{Normal distribution : $N(0,1)$}
We note that for the normal distribution,
unlike the case of uniform distribution,
the FEI conjecture does not follow from Khintchine's inequality.
Indeed, Corollary~\ref{cor:random-Khint} does not give us what we want,
i.e., a high probability on the event $\inf(f) = \Omega(\sqrt{n})$.
On the other hand, if we try to boost the probability
we end up with a weaker lower bound on the influence.
To see this, observe that when $w_1, \dots, w_n$ are drawn independently
from $N(0,1)$, 
\begin{enumerate}
\item $\E[\sum_i w_i^2] = n$ (since $\E[X^2] = 1$ for $X \sim N(0,1)$).
  This implies that with high probability $ \sum w_i^2 = \Omega(n)$,
  and hence $\sqrt{\sum w_i^2}= \Omega(\sqrt{n})$.
\item The probability that $\max_{i=1}^n |w_i| = O(\sqrt{\log n})$ is $1-o(1)$.
  Using the fact $\Pr[X > x] \leq  e^{-x^2/2}$ when $X \sim N(0,1)$,
  we have $\Pr[w_i > \sqrt{2c\log n}] \leq \frac{1}{n^c}$.
  Since $w_i$'s are drawn independently, the probability that
  all the $w_i$'s are less than $\sqrt{2c\log n}$ is lower bounded
  by $1-\frac{1}{n^{c-1}}$, where $c > 1$ is a constant. 
\end{enumerate}
Applying Corollary~\ref{cor:random-Khint} to this now gives us $\mathsf{Inf}(f) \geq \Omega\left(\sqrt{\frac{n}{\log n}}\right)$ with $1-o(1)$ probability.
Moreover, the max-central limit theorem 
implies that for large enough $n$,
with high probability, $\max_i|w_i| = \Theta(\sqrt{\log n})$ (see Example~10.5.3 in \cite{DN03}). 
However using the other technique, namely Theorem~\ref{thm:lb_random}, we can obtain our desired result. 
\begin{coro}\label{cor:FEINormal} 
If $w_i \sim N(0,1)$ for $i \in \{0, \ldots, n\}$, then the $FEI$ holds with probability at least $1 - e^{-\Omega(n)}$. 
\end{coro}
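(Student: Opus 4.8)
The plan is to derive Corollary~\ref{cor:FEINormal} directly from Theorem~\ref{thm:lb_random} and Theorem~\ref{thm:entub}, by instantiating the parameters of Theorem~\ref{thm:lb_random} with $n$-independent constants appropriate to $\calD = N(0,1)$. First I would record the moments that appear in the statement: $N(0,1)$ is symmetric about $0$ with $\E[w^2] = 1$, its third absolute moment is the fixed constant $\mu_3 = \E_{w\sim N(0,1)}[|w|^3] = 2\sqrt{2/\pi} \approx 1.596$, and the two auxiliary variances $\sigma_2^2 = \mathrm{Var}(w^2) = 2$ and $\sigma_3^2 = \mathrm{Var}(|w|^3)$ are finite constants not depending on $n$.

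Next I would fix $\alpha$ to be any constant strictly larger than $\mu_3$ --- for concreteness $\alpha = 2$ --- and choose $\delta > 0$ a small enough constant that $\mu_3\bigl(1 + \tfrac{2\delta}{1-\delta}\bigr) < \alpha$. Then $\Theta(\alpha) = \bigl(\alpha - \mu_3(1 + \tfrac{2\delta}{1-\delta})\bigr)/\sqrt{2\pi(1+\delta)}$ is a strictly positive constant, and $p_{\calD,1}(\alpha) = \Pr_{w\sim N(0,1)}[|w| \ge \alpha] > 0$ is likewise a fixed positive constant (about $0.046$ for $\alpha = 2$). This is exactly the regime for which the sharp Berry--Esseen constant of Shevtsova's theorem (Theorem~\ref{thm:shevtsova}) is convenient, since $\alpha$ need only exceed $\mu_3$ by an additive constant; for $N(0,1)$ a cruder constant would in fact already suffice because $\mu_3$ is itself small, but the sharp version keeps this argument uniform with the truncated-distribution cases.

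With these choices, each of the three failure probabilities in Theorem~\ref{thm:lb_random}, namely $e^{-\delta^2 n \mu_3^2/\sigma_3}$, $2e^{-\delta^2 n/\sigma_2}$ and $2e^{-(n/4)\,p_{\calD,1}(\alpha)^2}$, has exponent equal to a negative constant times $n$, so their sum is at most $e^{-\Omega(n)}$. Hence with probability at least $1 - e^{-\Omega(n)}$ over the draws $w_0, w_1, \dots, w_n \sim N(0,1)$ we obtain $\inf(f) \ge p_{\calD,1}(\alpha)\cdot\Theta(\alpha)\cdot\sqrt{n} = \Omega(\sqrt{n})$, the hidden constant depending only on $\alpha$ and $\delta$; no separate treatment of $w_0$ is needed, since Theorem~\ref{thm:lb_random} already incorporates the reduction from $w_0 \ne 0$ to the homogeneous case of Section~\ref{subsec:unbal}. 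Finally, Theorem~\ref{thm:entub} gives $\mathbb{H}(f) \le C\sqrt{n}$ for every linear threshold function, so on this high-probability event $\mathbb{H}(f) \le C\sqrt{n} = O(\inf(f))$, which is the FEI inequality for $f$. I do not expect a genuine obstacle here: the only thing to watch is the bookkeeping confirming that $\mu_3$, $\sigma_2$, $\sigma_3$ and $p_{\calD,1}(\alpha)$ are all $n$-independent and that $\alpha$ can be chosen with $\Theta(\alpha) > 0$ and $p_{\calD,1}(\alpha) > 0$ simultaneously, both of which are immediate for the normal distribution.
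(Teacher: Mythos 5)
Your proposal is correct and follows essentially the same route as the paper: compute the $n$-independent moments $\mu_3$, $\sigma_2$, $\sigma_3$ of $N(0,1)$, pick a constant $\alpha$ exceeding $\mu_3(1+\tfrac{2\delta}{1-\delta})$ so that both $\Theta(\alpha)$ and $p_{\calD,1}(\alpha)$ are positive constants, invoke Theorem~\ref{thm:lb_random} to get $\inf(f)=\Omega(\sqrt{n})$ with failure probability $e^{-\Omega(n)}$, and finish with Theorem~\ref{thm:entub}. The only cosmetic difference is the specific choice of $\alpha$ (you take $\alpha=2$; the paper takes $\alpha=\mu_3(2+\tfrac{2\delta}{1-\delta})$ and verifies $p_{\calD,1}(\alpha)=\Omega(1)$ via an explicit Gaussian tail lower bound), which does not change the argument.
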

\begin{proof}
  We use the following well known bounds on moments of $N(0,1)$.
  \[
  \E_{w \sim N(0,1)}[|w|^r] = \begin{cases}
    \sqrt{\frac{2}{\pi}}\left(r-1\right)!! & \text{if $r$ is odd}, \\
    (r-1)!! & \text{if $r$ is even.}
  \end{cases}  
  \]
  Therefore, we have $\mu_3 = \frac{2\sqrt{2}}{\sqrt{\pi}}$, $\sigma_2^2 = \E[w^4] - (\E[w^2])^2 = 2$, and $\sigma_3^2 = \E[|w|^6] - (\E[|w|^3])^2 = (15-\frac{8}{\pi})$. Further, we set $\alpha = \mu_3(2+\frac{2\delta}{1-\delta})$, for any $\delta > 0$. Hence,
  \[p_{\calD,1}(\alpha) \geq 2\cdot \frac{1}{\sqrt{2\pi}} e^{-\frac{\alpha^2}{2}} \left(\frac{1}{\alpha}-\frac{1}{\alpha^3} \right) = \Omega(1) \enspace .\]
  The first inequality is easily established; for example, see Exercise~1, Chapter~7 in \cite{Feller68}. Thus, from Theorem~\ref{thm:lb_random} we have that $\inf(f) = \Omega(\sqrt{n})$ and now using Theorem~\ref{thm:entub} we have our result. 
\qed\end{proof}
\begin{remark}
  \label{rem:generic-distribution}
  We note that the proof of Theorem~\ref{thm:lb_random} shows that
  the FEI conejcture holds with high probability,
  as long as $\calD$ (with $\mu=0$ and $\sigma^2=1$)
  statisfies the following properties:
  $(i)$~$\E_{w\sim \calD}[|w|^3]$ is finite,
  and $(ii)$~$\Pr_{w\sim \calD}[|w| > \E_{w\sim \calD}[|w|^3]] = \Omega(1)$.
  In particular, our proof holds for \emph{truncated} distributions,
  and the optimality of the constant in Shevtsova's theorem
  (Theorem~\ref{thm:shevtsova})
  is crucial in such cases to establish property~$(ii)$.  
\end{remark}

%
%
%

\section{Conclusion and Open Problems}
\label{sec:conclusion}


We proved in this paper that for a random linear threshold function $f(x):=\sign(w_0 + \sum_i w_ix_i)$, where the $w_i$'s are drawn from a distribution
that has some `nice' properties (Remark~\ref{rem:generic-distribution}),  the
FEI conjecture holds with high probability. Moreover, we show that the uniform distribution over an interval, say $[-1,1]$, and the normal distribution $N(0,1)$ satisfy these 'nice' properties. Indeed, we established that a random LTF sampled
according to these distributions has influence $\Omega(\sqrt{n})$.  When combined with the $O(\sqrt{n})$ upper bound 
(Theorem~\ref{thm:entub}) from previous work on the Fourier entropy of \emph{all} LTF's, we conclude that the FEI conjecture holds for the random LTF's sampled as above.
In the process, we obtain non-trivial lower bounds on the influence of $f$
in terms of the $w_i$'s.

An obvious open question is to prove that the FEI conjecture holds for all LTF's. While our current techniques seem far from sufficient to achieve this goal, a natural question is to prove a generalization of Theorem~\ref{thm:entub}, and give an upper bound on the
Fourier entropy in terms of $w_i$'s. We believe it is possible to obtain such a general upper bound as a function of a suitable notion of ``skewness'' of the weights $w_i$. 

Another natural question is whether one can show that FEI conjecture holds for a random \emph{polynomial} threshold function of degree $d$. A polynomial threshold function of degree $d$ is defined as the sign of a degree-$d$  polynomial. For a degree-$d$ PTF
$f(x) = \sign(p(x_1,\ldots ,x_n))$, observe that 
$\inf_i(f)=\Pr_x[\sign(p(x)) \neq \sign(p(x^i))]=\Pr_x[|p(x)|< 2|D_ip(x)|]$,
where $D_ip(x)$ is the partial derivative of $p(x)$ with respect to $x_i$.
Hence, to lower bound the influence, a standard approach would be to show that $p(x)$ has
certain \emph{concentration} properties and $D_ip(x)$ has certain \emph{anti-concentration} properties, for a random $p$. However, we currently do not know such strong enough concentration/anti-concentration bounds. 
Our techniques do not seem to generalize even to polynomial threshold functions of degree $2$. Such generalizations seem to require more powerful tools. We also note that currently the best bound on the Fourier entropy of a degree-$d$ PTF is $O(\sqrt{n} (\log n)^{O(d\log d)})$ where the constant in $O(\cdot)$ depends on the degree $d$. (It follows from Kane's bound on the average sensitivity of a PTF~\cite{Kane14}.)



\paragraph{Acknowledgments.} We thank the reviewers for helpful comments that improved the presentation of the paper.

\bibliographystyle{splncs}
\bibliography{entropy}

\begin{thebibliography}{10}

\bibitem{LMN93}
Linial, N., Mansour, Y., Nisan, N.:
\newblock Constant depth circuits, fourier transform, and learnability.
\newblock J. ACM \textbf{40}(3) (July 1993)  607--620

\bibitem{Boppana97}
Boppana, R.B.:
\newblock The average sensitivity of bounded-depth circuits.
\newblock Inf. Process. Lett. \textbf{63}(5) (1997)  257--261

\bibitem{GKLR12}
Ganor, A., Komargodski, I., Lee, T., Raz, R.:
\newblock On the noise stability of small demorgan formulas.
\newblock Technical report, Electronic Colloquium on Computational Complexity
  {(ECCC)} TR 12-174 (2012)

\bibitem{OSS05}
O'Donnell, R., Saks, M., Schramm, O.:
\newblock Every decision tree has an in.uential variable.
\newblock In: Proc. of the 46th Annual IEEE Symposium on Foundations of
  Computer Science. FOCS '05, IEEE Computer Society (2005)  31--39

\bibitem{ODonnell-book}
O'Donnell, R.:
\newblock Analysis of {B}oolean Functions.
\newblock Cambridge University Press (2014)

\bibitem{FK96}
Friedgut, E., Kalai, G.:
\newblock Every monotone graph property has a sharp threshold.
\newblock Proceedings of the American Mathematical Society \textbf{124}(10)
  (1996)  2993--3002

\bibitem{BK97}
Bourgain, J., Kalai, G.:
\newblock Influences of variables and threshold intervals under group
  symmetries.
\newblock Geometric and Functional Analysis (GAFA) \textbf{7}(3) (1997)
  438--461

\bibitem{Mansour95}
Mansour, Y.:
\newblock An $\mathcal{O}$(n$^{\log\log n}$) learning algorithm for {DNF} under
  the uniform distribution.
\newblock Journal of Computer and System Sciences \textbf{50}(3) (1995)
  543--550

\bibitem{GKK08}
Gopalan, P., Kalai, A.T., Klivans, A.:
\newblock Agnostically learning decision trees.
\newblock In: Proc. of the 40th annual ACM symposium on Theory of computing.
  STOC '08 (2008)  527--536

\bibitem{GKK08-colt}
Gopalan, P., Kalai, A., Klivans, A.R.:
\newblock A query algorithm for agnostically learning dnf?
\newblock In: 21st Annual Conference on Learning Theory - {COLT} 2008,
  Helsinki, Finland, July 9-12, 2008. (2008)  515--516

\bibitem{Friedgut98}
Friedgut, E.:
\newblock Boolean functions with low average sensitivity depend on few
  coordinates.
\newblock Combinatorica \textbf{18}(1) (1998)  27--35

\bibitem{KKL88}
Kahn, J., Kalai, G., Linial, N.:
\newblock The influence of variables on {B}oolean functions.
\newblock In: Proc. of the 29th Annual IEEE Symposium on Foundations of
  Computer Science. (1988)  68--80

\bibitem{OWZ11}
O'Donnell, R., Wright, J., Zhou, Y.:
\newblock The {F}ourier entropy-influence conjecture for certain classes of
  {B}oolean functions.
\newblock In: Proceedings of Automata, Languages and Programming - 38th
  International Colloquium. (2011)  330--341

\bibitem{Kalai-blog}
Kalai, G.:
\newblock The entropy/influence conjecture.
\newblock Terence Tao's blog:
  https://terrytao.wordpress.com/2007/08/16/gil-kalai-the-entropyinfluence-conjecture/

\bibitem{KLW10}
Klivans, A., Lee, H., Wan, A.:
\newblock Mansour's conjecture is true for random {DNF} formulas.
\newblock In: Proc. of the 23rd Conference on Learning Theory. (2010)  368--380

\bibitem{DPV11}
Das, B., Pal, M., Visavaliya, V.:
\newblock The entropy influence conjecture revisited.
\newblock Technical report, arXiv:1110.4301 (2011)

\bibitem{OT13}
O'Donnell, R., Tan, L.Y.:
\newblock A composition theorem for the {F}ourier entropy-influence conjecture.
\newblock In: Proceedings of Automata, Languages and Programming - 40th
  International Colloquium. (2013)  780--791

\bibitem{CKLS16}
Chakraborty, S., Kulkarni, R., Lokam, S.V., Saurabh, N.:
\newblock Upper bounds on fourier entropy.
\newblock Theoretical Computer Science \textbf{654} (2016)  92--112

\bibitem{www14}
Wan, A., Wright, J., Wu, C.:
\newblock Decision trees, protocols and the entropy-influence conjecture.
\newblock In: Innovations in Theoretical Computer Science. (2014)  67--80

\bibitem{HW59}
Hicks, J.S., Wheeling, R.F.:
\newblock An efficient method for generating uniformly distributed points on
  the surface of an n-dimensional sphere.
\newblock Commun. ACM \textbf{2}(4) (April 1959)  17--19

\bibitem{Muller59}
Muller, M.E.:
\newblock A note on a method for generating points uniformly on n-dimensional
  spheres.
\newblock Commun. ACM \textbf{2}(4) (April 1959)  19--20

\bibitem{Mars72}
Marsaglia, G.:
\newblock Choosing a point from the surface of a sphere.
\newblock Ann. Math. Statist. \textbf{43}(2) (04 1972)  645--646

\bibitem{PB97}
Petersen, W.P., Bernasconi, A.:
\newblock Uniform sampling from an $n$-sphere.
\newblock Technical report, Swiss Center for Scientific Computing (1997)

\bibitem{Sza76}
Szarek, S.J.:
\newblock On the best constants in the khinchine inequality.
\newblock Studia Math. \textbf{58} (1976)  197--208

\bibitem{shevtsova12}
Shevtsova, I.:
\newblock Moment-type estimates with asymptotically optimal structure for the
  accuracy of the normal approximation.
\newblock Annales Mathematicae Et Informaticae \textbf{39} (2012)  241--307

\bibitem{BLM13}
Boucheron, S., Lugosi, G., Massart, P.:
\newblock Concentration Inequalities: A Nonasymptotic Theory of Independence.
\newblock Oxford University Press (2013)

\bibitem{DN03}
David, H.A., Nagaraja, H.N.:
\newblock Order Statistics. 3 edn.
\newblock John Wiley \& Sons (2003)

\bibitem{Feller68}
Feller, W.:
\newblock An Introduction to Probability Theory and Its Applications, Volume 1.
  3 edn.
\newblock John Wiley \& Sons (1968)

\bibitem{Kane14}
Kane, D.M.:
\newblock The correct exponent for the gotsman--linial conjecture.
\newblock computational complexity \textbf{23}(2) (Jun 2014)  151--175

\end{thebibliography}

\appendix
\section{Reducing Non-Homogeneous to Homogeneous case}
\label{subsec:unbal}

Let $f(x) = \sign(w_{0} + \sum_{i=1}^n w_ix_i)$ for all $x \in \sbool^n$.
Recall from Eq.~\eqref{eq:inf-ltf} we have an exact expression for the
$i$-th influence for all $1 \leq i \leq n$. We can relax the probability estimate to lower bound the influence as follows, 
\begin{align}
  \label{eq:non-hom-inf}
\inf_i(f)  
  \geq \frac{1}{2}\Pr_{x \in \sbool^n}\left[\left|w_0 + \sum_{j  \leq n \colon j \neq i} w_jx_j\right|\leq  |w_i| \right] \enspace . 
\end{align}
Now consider the function $g(x_0, x_1, \dots, x_n) = \sign(\sum_{i=0}^{n} w_ix_i)$ by adding the extra variable $x_{0}$. We claim that $\inf_i(g)\leq 2 \inf_i(f)$, for all $1\leq i \leq n$. Fix an $ i \in [n]$. 
From Eq.~\eqref{eq:inf-ltf} we know that $\inf_i(g)$ equals 
\[\frac{1}{2}\left(\Pr\left[- |w_i| < w_0 + \sum_{1 \leq j  \leq n \colon j \neq i} w_jx_j \leq  |w_i| \right] + \Pr\left[-|w_i| < - w_0 +\sum_{1 \leq j  \leq n \colon j \neq i} w_jx_j \leq  |w_i| \right]\right),
\]
where the probabilities are  uniform distribution over
$x_1, \dots, x_n \in \sbool^n$. 
By relaxing the event in each case 
we have
\begin{align*}
  \inf_i(g) & \leq \frac{1}{2}\left(\Pr\left[- |w_i| \leq w_0 + \sum_{1 \leq j  \leq n \colon j \neq i} w_jx_j \leq  |w_i| \right] + \Pr\left[-|w_i| \leq - w_0 +\sum_{1 \leq j  \leq n \colon j \neq i} w_jx_j \leq  |w_i| \right]\right).
\end{align*}
It is easily seen that, 
\[\Pr_{x \in \sbool^{n}}\left[-|w_i| \leq w_0+\sum_{1\leq j  \leq n \colon j \neq i} w_jx_j \leq  |w_i| \right] = \Pr_{x \in \sbool^{n}}\left[-|w_i| \leq -w_0 + \sum_{1\leq j  \leq n \colon j \neq i} w_jx_j \leq  |w_i| \right] \enspace.\]
Indeed, there exists a 1-1 correspondence between $n$-bit strings
satisfying the left hand side event and the right hand side event. 
Thus,
\[\inf_i(g) \leq \Pr_{x \in \sbool^{n}}\left[-|w_i| \leq w_0+\sum_{1\leq j  \leq n \colon j \neq i} w_jx_j \leq  |w_i| \right] \leq 2\cdot  \inf_i(f) \enspace,\]
where the second inequality follows from~\eqref{eq:non-hom-inf}. 

Therefore, we have \[\inf(f) \geq \sum_{i=1}^{n}\frac{\inf_i(g)}{2} = \frac{\inf(g) - \inf_0(g)}{2} \geq \frac{\inf(g) -1}{2} \enspace .\]
Thus we can translate a lower bound on influences in the homogeneous case to the non-homogeneous case. 
\end{document}